\documentclass[12pt]{iopart}


\usepackage{graphicx}
\graphicspath{{Materials/}} 
\usepackage{xfrac}
\usepackage{iopams}  
\expandafter\let\csname equation*\endcsname\relax
\expandafter\let\csname endequation*\endcsname\relax
\usepackage{physics}
\usepackage{cleveref}
\usepackage{dsfont}
\usepackage{listings}
\usepackage{xcolor}
\usepackage[section]{placeins}



\lstset{basicstyle=\ttfamily,
escapeinside={||},
mathescape=true}

\begin{document}

\title[Beyond Gaussian Quantum Channels]{Beyond Gaussian Quantum Channels: A model case.}

\author{Daniel Speed, Wenyang Lyu and Roman Schubert}

\address{School of Mathematics, University of Bristol, Bristol BS8 1UG, United Kingdom}
\ead{daniel.speed@bristol.ac.uk, wl16290@bristol.ac.uk, roman.schubert@bristol.ac.uk}
\vspace{10pt}
\begin{indented}
\item{\today}
\end{indented}

\begin{abstract}
Gaussian quantum channels are well understood and have many applications, e.g., in Quantum Information Theory and in Quantum Optics. For more general quantum channels one can in general use semiclassical approximations or perturbation theory, but it is not easy to judge the accuracy of such methods. We study a relatively simple model case, where the quantum channel is generated by a Lindblad equation where one of the Lindblad operators is a multiple of the internal Hamiltonian, and therefore the channel is not Gaussian. For this model we can compute the characteristic function of the action of the channel on a Gaussian state explicitly and we can as well derive a representation of the propagator in an integral form. This allows us to compare the exact results with semiclassical approximations and perturbation theory and evaluate their accuracy. We finally apply these results to the study of the evolution of the von Neumann entropy of a state. 
\end{abstract}

%
%
%
%
%

\section{Introduction}

In quantum mechanics the states of a system are represented by density operators on the Hilbert space $\cH$, i.e.,  positive trace class operators $\hat\rho$ which have trace one. If the system is closed the time evolution of a state is given by conjugation with a unitary time evolution operator $U(t)$, $\hat\rho(t)=U(t)\hat \rho_0U^*(t)$. But if the system is connected to an environment which acts as a source of noise, then the time evolution is described by a family of linear maps $\cE_t$ acting on density matrices $\hat\rho(t)=\cE_t(\hat\rho_0)$. These maps are called quantum channels and they are linear, trace preserving and completely positive.    Quantum channels are used to describe the effect of external operations on a quantum system in combination with the internal dynamics and they provide therefore the natural framework for the theory of Quantum Computation, \cite{nielsen00,BrauvLo05,Hol12}. 

Gaussian quantum channels are a special class of quantum channels on $\R^n$, i.e., in the case where the Hilbert space is infinite dimensional and given by $L^2(\R^n)$. Let us first recall some standard notation. We let $x=(p,q)\in \R^n\times \R^n$ be a phase space point, and $\hat x=(\hat p_1, \cdots, \hat p_n,\hat q_1, \cdots, \hat q_n)$ be the standard position and momentum operators with the commutation relations 
\begin{equation}\label{eq:Omega}
[\hat x_j,\hat x_k]=\ui\hbar \Omega_{jk}\,\, \quad \text{where} \quad \Omega=\begin{pmatrix} 0 & -I\\ I & 0\end{pmatrix}\,\, .
\end{equation}

Here $\Omega$ is the matrix representing the standard symplectic form on $\R^n\times \R^n$ and $I$ denotes the $n\times n$ identity matrix. We can then define the unitary phase space translation operators $T(\xi):=\ue^{-\frac{\ui}{\hbar} \xi\cdot \Omega \hat x}$, where $\xi\in \R^n\times\R^n$, and for any density matrix $\hat \rho$ the corresponding characteristic function is defined as
\begin{equation}
\begin{split}
\chi_{\hat \rho}(\xi):=&\tr[\ue^{\frac{\ui}{\hbar} \xi\cdot \hat x} \hat \rho] \\
=&\int \ue^{\frac{\ui}{\hbar} \xi\cdot x} \rho(x)\,\, \ud x\,\, ,
\end{split}
\end{equation}
where $\rho(x)$  is the Wigner function\footnote{Notice that more commonly the notation $\hat \rho$ denotes that $\rho$ is the Weyl symbol of $\hat\rho$, which differs from the  Wigner function by a factor of  $1/(2\pi\hbar)^n$. In our context it is  more convenient to work with the Wigner function which  is normalised so that $\int \rho(x)\, \ud x =\tr[\hat\rho]$, see as well Appendix A.}  of the state $\hat \rho$. 

One says that $\cE$ is a Gaussian Quantum Channel if there exist a non-negative quadratic form $\xi\cdot D\xi $ and a linear map $R$ such that 
\begin{equation}\label{eq:def-GQC}
\chi_{\cE(\hat \rho)}(\xi)=\ue^{-\frac{1}{2\hbar}\xi\cdot D\xi} \chi_{\hat \rho} (R\xi)\,\, ,
\end{equation}
where the matrices $D$ and $R$ satisfy the relation
\begin{equation}
D+\ui\Omega\geq \ui R^T\Omega R\,\, ,
\end{equation}
which implies the complete positivity of the quantum channel, \cite{BrauvLo05, Hol12, Weed12, AdeRagLee14}. Such channels have the advantage that one can compute many properties of them which are harder to determine for more general quantum channels, and hence have been studied in great detail \cite{holevo1998sending,eisert2005gaussian,AdeRagLee14}.

One way a Gaussian quantum channel can be generated is as the solution to the Lindblad-Gorini–Kossakowski–Sudarshan (LGKS) equation, which describes the evolution of a quantum system which is in contact with an environment, in situations where the evolution is Markovian, i.e, memory effects can be neglected, \cite{AliLen07}. The LGKS equation is of the form 
\begin{equation}
\ui\hbar\pa_t \hat \rho =[\hat H,\hat \rho]+\frac{\ui}{2}\sum_k 2\hat L_k\hat \rho \hat L_k^* -\hat L_k^*\hat L_k \hat \rho-\hat \rho \hat L_k^*\hat L_k\,\, ,
\end{equation}
where $\hat H$ is the internal Hamiltonian and the Lindblad operators $\hat L_k$ describe the coupling to the environment. In the case that $\hat H$ is a quadratic function of $\hat x$ and all the Lindblad operators are linear functions of $\hat x$, i.e., $\hat H=\frac{1}{2}\hat x\cdot Q\hat x$ and $\hat L_k=l_k\cdot \Omega \hat x$, with $Q$ real symmetric and $l_k\in \C^{2n}$, the time evolution is given by a one parameter semigroup of Gaussian quantum channels, \cite{Lind76b,BroOzo10}.  More explicitly for $t\geq 0$ we have $\hat\rho(t)=\cE_t(\hat \rho_0)$, where $\cE_0=I$ and $\cE_t$ is a Gaussian quantum channel with 
\begin{equation}\label{R_and_D_for_GQC}
R_t=\ue^{t(\Omega Q+\Im K\Omega)} \,\, \quad \text{and}\quad D_t=\int_0^t R_s\Re K R_s^T\, \, \ud s\,\, ,
\end{equation}
where $K=\sum_k \bar l_k l_k^T$.

Lindblad operators which are linear in $\hat x$ are frequently used, for instance $\hat L_k= \sigma \hat q_k$ is used to  model the interaction of the system with an environment consisting of random scatterers, \cite{JoosEtAl03,Horn09}. 
Similarly the interaction with a heat bath can be modelled by using suitable multiples of creation and annihilation operators as Lindblad operators \cite{breuer2002theory,AliLen07}. 

Another frequently used choice of Lindblad operators is a multiple of the internal Hamiltonian, 
$\hat L= \gamma \hat H$, this is sometimes called dephasing, see e.g., \cite{wilde13}. 
In this case the corresponding quantum channel will no longer be Gaussian and we will explore  in this paper one particular example to understand how this additional term will affect the characteristics of the quantum channel. 

A state is called a Gaussian state if its characteristic function is a Gaussian. A standard example is a coherent state which is a pure state defined by the wave function
\begin{equation}
u_0^{(g)}(q)=\frac{1}{(\pi \hbar g)^{1/4}} \ue^{-\frac{1}{2\hbar g} q^2}
\end{equation}
where $g>0$, and for simplicity from here on we restricted ourselves to the case $n=1$. The characteristic function and the Wigner function of this state are given by 
\begin{equation}\label{eq:Gaussian-state-diag-G}
\chi_0^{(g)}(\xi)=\ue^{ -\frac{1}{4\hbar} \xi\cdot G\xi}\,\, \text{and} \quad W_0^{(g)}(x)=\frac{1}{\pi\hbar}\ue^{-\frac{1}{\hbar} x\cdot G^{-1}x}\quad\text{with}\quad G=\begin{pmatrix}
g & 0\\ 0 & 1/g  
\end{pmatrix}\,\, ,
\end{equation}
respectively.  
More generally the translated states $u_z^{(g)}:=T(z)u_0^{(g)}$ with $z=(p,q)\in \R^2$ have  characteristic function and Wigner function
\begin{equation}
\chi_z^{(g)}(\xi)=\ue^{\frac{\ui}{\hbar} z\cdot \xi}\chi_0^{(g)}(\xi) \,\, \text{and}\quad 
W_z^{(g)}(x)=W_0^{(g)}(x-z)\,\, , 
\end{equation}
respectively. In many applications superpositions of coherent states $\psi =\frac{1}{\sqrt{\mathcal{N}}}\sum_{j=1}^N u_{z_j}^{(g)}$ 
are very important and their characteristic functions and Wigner functions are given by 
\begin{align}\label{eq:gen-cat-state}
\chi(\xi)&=\frac{1}{\mathcal{N}}\sum_{j,k=1}^N \ue^{\frac{\ui}{2\hbar}z_j\cdot \Omega z_k} \ue^{\frac{\ui}{\hbar}z_{jk}\cdot \xi} \chi_0^{(g)}(\xi-\Omega^T\delta z_{jk})\\
W(x)&=\frac{1}{\mathcal{N}}\sum_{j,k=1}^N \ue^{-\frac{\ui}{2\hbar}z_j\cdot \Omega z_k}\ue^{\frac{\ui}{\hbar} x\cdot \Omega\delta z_{jk}} W_0^{(g)}(x-z_{jk})\,\, ,\label{eq:Wigner-sum}
\end{align}
respectively, where $\mathcal{N}$ is a normalisation constant and 
\begin{equation}
 z_{jk}=\frac{1}{2}(z_j+z_k)\,\, ,\quad \delta z_{jk}=z_j-z_k\,\, .
\end{equation}
We see that the terms with $k=j$ in the Wigner function are positive and correspond to classical probability densities, whereas the terms with $k\neq j$, if $z_j\neq z_k$, give rise to oscillatory terms which are due to quantum interference effects. These are the terms which are suppressed by decoherence, \cite{JoosEtAl03, Horn09}. For Gaussian quantum channels we see by \eqref{eq:def-GQC} that decoherence happens when the matrix $D$ is non-degenerate, because  the terms 
\begin{equation}
\chi_0^{(g)}(R(\xi-\Omega^T\delta z_{j,k}))\ue^{-\frac{1}{2\hbar}\xi\cdot D\xi} =\ue^{-\frac{1}{4\hbar}\big[ (\xi-\Omega^T\delta z_{j,k}) \cdot R^TGR(\xi-\Omega^T\delta z_{j,k})+2 \xi \cdot D\xi\big]}
\end{equation}
are exponentially small in $1/\hbar$ if $\delta z_{j,k}\neq 0$. Therefore if $z_i\neq z_j$ for all $i,j$ applying a Gaussian quantum channel with non-degenerate $D$ to a state \eqref{eq:gen-cat-state} leaves only the diagonal terms with $i=j$ 
\begin{align}
\chi(\xi)&\mapsto_{\mathcal{E}}\frac{1}{\mathcal{N}}\sum_{j=1}^N  \ue^{\frac{\ui}{\hbar} z_j\cdot R\xi-\frac{1}{2\hbar} \xi\cdot D\xi} \chi_0^{(g)}(R\xi)\\
W(x)&\mapsto_{\mathcal{E}}\frac{1}{\mathcal{N}}\sum_{j=1}^N  W_0^{(g')}(x- R^Tz_{j})\,\, , \quad\text{with}\quad G'=R^TGR+2D\,\, ,
\end{align}
up to errors exponentially small in $1/\hbar\times \inf_{j\neq k}\abs{\delta z_{j,k}}^2$, respectively. In particular the Wigner function now looks like a combination of classical probability densities, i.e., decoherence has suppressed the quantum interference terms. 

In the description above we have restricted ourselves to pure states, but we can extend this easily to arbitrary Gaussian states by allowing in \eqref{eq:Gaussian-state-diag-G} a more general $G$, 
\begin{equation}\label{eq:gen-CS}
\chi_0(\xi)=\ue^{ -\frac{1}{4\hbar} \xi\cdot G\xi}\,\, \text{and} \quad W_0(x)=\frac{1}{\pi\hbar\sqrt{\det G}}\ue^{-\frac{1}{\hbar} x\cdot G^{-1}x}\quad\text{with}\quad G+\ui\Omega \geq 0\,\, .  \end{equation}
The condition on $G$ is the Robertson Schr{\"o}dinger uncertainty relation which guarantees that $\chi_0$ is the characteristic function of a density operator. 
There are different conventions about factors of $1/2$ and $1/4$ in the exponents of \eqref{eq:gen-CS}, \cite{AdeRagLee14,Weed12}, in the convention we use the matrix $G$ is related to the covariances of the state via 
\begin{equation}\label{eq:G-cov}
    \frac{\hbar}{2}G=\Gamma:=\begin{pmatrix}\la \hat q^2\ra & \Re \la \hat p\hat q\ra \\ \Re \la \hat p\hat q\ra & \la \hat p^2\ra 
    \end{pmatrix}\,\, ,
\end{equation}
so that a symplectic matrix $G$ corresponds to a state with minimal uncertainties.



\section{The model}

We will consider a free particle, i.e., an internal Hamiltonian $\hat H=\frac{1}{2}\hat{p}^2$ and Lindblad operators 
\begin{equation}\label{eq:two-Lindblads}
\hat L_1=\sigma \, \hat q\,\, \quad \text{and} \quad \hat L_2=\gamma \, \hat H\,\, .
\end{equation}
As both Lindblad operators are Hermitian the  LGKS equation can be rewritten in terms of double commutators, 
\begin{equation}\label{eq:Lindblad-double-commutator}
\ui\hbar \pa_t \hat \rho=[\hat H, \hat \rho]-\frac{\ui}{2}\sum_{j=1}^{2}[\hat L_j[\hat L_j,\hat \rho]]\,\, ,
\end{equation}
which can then be rewritten as an equation for the Wigner function $\rho(p,q)$ of $\hat \rho$ as
\begin{equation}
\pa_t \rho=\{H,\rho\}+\frac{\hbar}{2}\sum_{j=1}^2 \{ L_j,\{L_j,\rho\}\}\,\, . 
\end{equation}
Here we have used the standard result from semiclassical calculus that the Wigner function of a commutator $[\hat A,\hat \rho]$ is $\ui\hbar \{A,\rho\}$ if $\hat A$ is the quantisation of linear or quadratic functions \cite{Zwo12}. 

For the Lindblad operators \eqref{eq:two-Lindblads} the equation becomes
\begin{equation}\label{eq:Lindblad_model-case}
\pa_t\rho=-p\pa_q\rho+\frac{\hbar}{2}\sigma^2 \pa_p^2\rho+\frac{\hbar}{2} \gamma^2 p^2\pa_q^2\rho\,\, 
\end{equation}
where $\rho$ is a function of $(p,q,t)$. We can rewrite this using the vector fields $V_0=p\pa_q$ and $V_1=\pa_p$ as 
\begin{equation}\label{eq:Lindblad-V}
\pa_t\rho=-V_0\rho+\frac{\hbar}{2}\big[\sigma^2 V_1^2\rho+\gamma^2 V_0^2\rho\big]\,\, ,
\end{equation}
so the time evolution of $\rho$ is governed by transport along the vector field $V_0$ and diffusion along the vector fields $V_1$ and $V_0$. 
These vector fields are illustrated in Figure \ref{fig:vectorfields_plot}, we see that in particular at any point $x=(p,q)$ with $p\neq 0$ the two vector fields $V_0$ and $V_1$ span all directions, so diffusion will affect all degrees of freedom. Whereas at $p=0$ the vector field $V_0$ vanishes, and we have diffusion only in the direction of $V_1$, and so a state concentrated near $p=0$ will experience a delayed onset of diffusion in the $q$ direction via the commutator $[V_0,V_1]=-\pa_q$. This is related to the H{\"o}rmander condition in the theory of hypoelliptic equations, \cite{Hor67,Agr20}.

\begin{figure}[t!]
\centering
\includegraphics[width=0.8\linewidth]{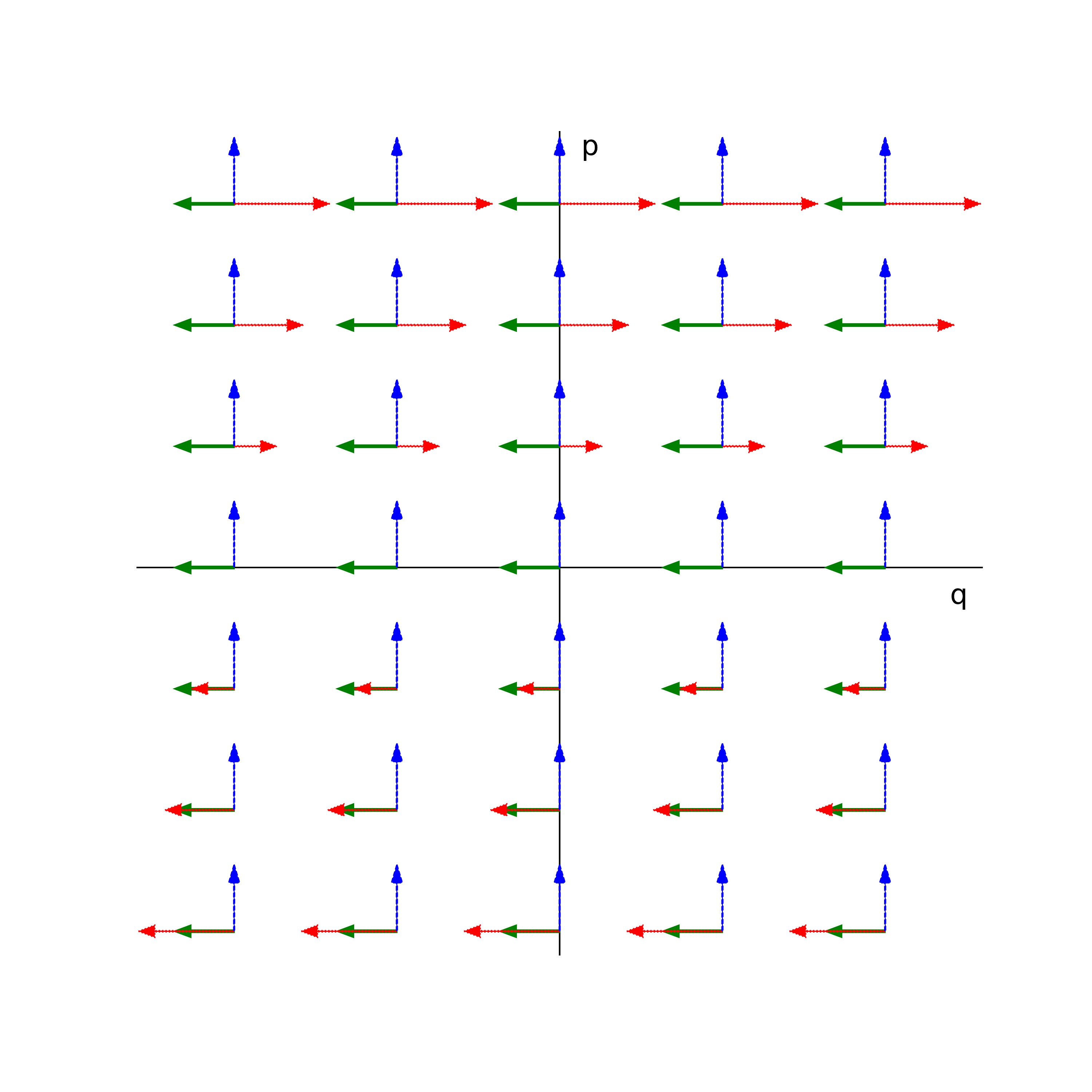}
\caption{The vector fields $V_0$ (red, dotted) and $V_1$ (blue, dashed) which are the building blocks for the phase space LGKS equation \eqref{eq:Lindblad-V}, and their commutator $V_2=[V_0,V_1]$ (green, solid). We see that $V_0$ and $V_1$ span the phase space at all points, except at the line $p=0$ where $V_0$ vanishes, but taking the commutator $V_2$ we still get a spanning set, and therefore decoherence.}\label{fig:vectorfields_plot}
\end{figure}

In the theory of hypoelliptic equations and sub-Riemannian geometry the operator $V_1^2+V_0^2$ is known as the Grushin operator, and \eqref{eq:Lindblad-V} is closely related to the equation for the heat kernel of the Grushin operator, see \cite{Chang09,Barilari12}. 
The strategy we use to solve \eqref{eq:Lindblad-V} is similar to ideas used to study the Grushin operator, see \cite{Chang15}.

In order to solve \eqref{eq:Lindblad_model-case} we introduce a partial Fourier transform of $\rho$, 
\begin{equation}\label{eq:partial-FT}
w(t,p,\eta):=\int \ue^{\frac{\ui}{\hbar} q\eta} \rho(t,p,q)\,\, \ud q
\end{equation}
for which \eqref{eq:Lindblad_model-case} becomes
\begin{equation}\label{eq:w_Lindblad}
\ui \hbar \pa_tw=-\ui\bigg[ -\frac{\hbar^2\sigma^2}{2} \pa_p^2 w+\frac{\gamma^2\eta^2}{2}\, p^2 w\bigg]- \eta p w \,\, .
\end{equation}

This equation is of the form of a one dimensional Schr{\"o}dinger equation, with non-Hermitian Hamiltonian given as the quantisation of the complex Hamilton function 
\begin{equation}\label{eq:mod-Ham}
H(\xi,p)=-\frac{\ui}{2}(\sigma^2\xi^2+\gamma^2\eta^2 p^2)-\eta p
\end{equation}
with phase space variable $(\xi,p)$. This is a second order polynomial in $(\xi,p)$ and hence we can solve the Schr{\"o}dinger equation explicitly. 

We will first look at the special case of propagation of Gaussian wave packets and then discuss the general solutions.

\subsection{Gaussian Wavepackets}\label{sec:Gaussian-prop}

We consider the case that the initial state has a Wigner function that is a sum of Gaussians of the form 
\begin{equation}\label{eq:Gaussian-0}
\rho_0(p,q)=\ue^{\frac{\ui}{\hbar}x\cdot \Omega\delta z}W_0^{(g)}(x-z_0)
=\frac{1}{\pi\hbar}\ue^{\frac{\ui}{\hbar}(q\delta p-p\delta q)} \ue^{-\frac{1}{\hbar} [g (q- q_0)^2+\frac{1}{g} (p- p_0)^2]}\,\, ,
\end{equation}
where $x=(p,q)$, and $z_0=(p_0,q_0)$ and $\delta z=(\delta p,\delta q)$ are parameters. Recall that \eqref{eq:Wigner-sum} is of this form, in particular of the initial state is a cat state, i.e., a superposition of two coherent states, we have by \eqref{eq:Wigner-sum} 
\begin{equation}\label{eq:cat-state}
    \rho^{(0)}(x)=\rho^{(0)}_{d}(x)+\rho^{(0)}_{osc}(x)
\end{equation}
with 
\begin{align}
 \rho^{(0)}_{d}(x)&=\frac{1}{\mathcal{N}\pi\hbar} \ue^{-\frac{1}{\hbar} [g (q- q_1)^2+\frac{1}{g} (p-p_1)^2]}+\frac{1}{\mathcal{N}\pi\hbar} \ue^{-\frac{1}{\hbar} [g (q-q_2)^2+\frac{1}{g} (p-p_2)^2]}\label{eq:rho-d}\\
 \rho^{(0)}_{osc}(x)&=\frac{1}{\mathcal{N}\pi\hbar}\ue^{\frac{\ui}{\hbar}(q\delta p-p\delta q)} \ue^{-\frac{1}{\hbar} [g (q-q_0)^2+\frac{1}{g} (p- p_0)^2]}+\frac{1}{\mathcal{N}\pi\hbar}\ue^{-\frac{\ui}{\hbar}(q\delta p-p\delta q)} \ue^{-\frac{1}{\hbar} [g (q- q_0)^2+\frac{1}{g} (p- p_0)^2]}\,\, ,\label{eq:rho-osc}
\end{align}
where $p_0=(p_1+p_2)/2$, $q_0=(q_1+q_2)/2$ and $\delta p=p_2-p_1$, $\delta q=q_2-q_1$, which is clearly a sum of terms of the form \eqref{eq:Gaussian-0}. Here the normalisation constant is given by 
\begin{equation}
    \mathcal{N}=2+2\cos\bigg(\frac{1}{\hbar}(q_0\delta p-p_0\delta q)\bigg)\ue^{-\frac{1}{4\hbar}[(\delta p)^2+(\delta q)^2]}.
\end{equation}
In Figure \ref{fig:rho_evolution3d_cat} we present a plot of $\rho^{(0)}$ and its time evolution. We see clearly the differences in the evolution of $\rho_d$ and $\rho_{osc}$, in particular that the oscillatory part is dampened out extremely rapidly, which is the effect of decoherence.

\begin{figure}[t!]
\centering
\subfigure[$\gamma=1,t=0$]{\includegraphics[width=0.3\linewidth]{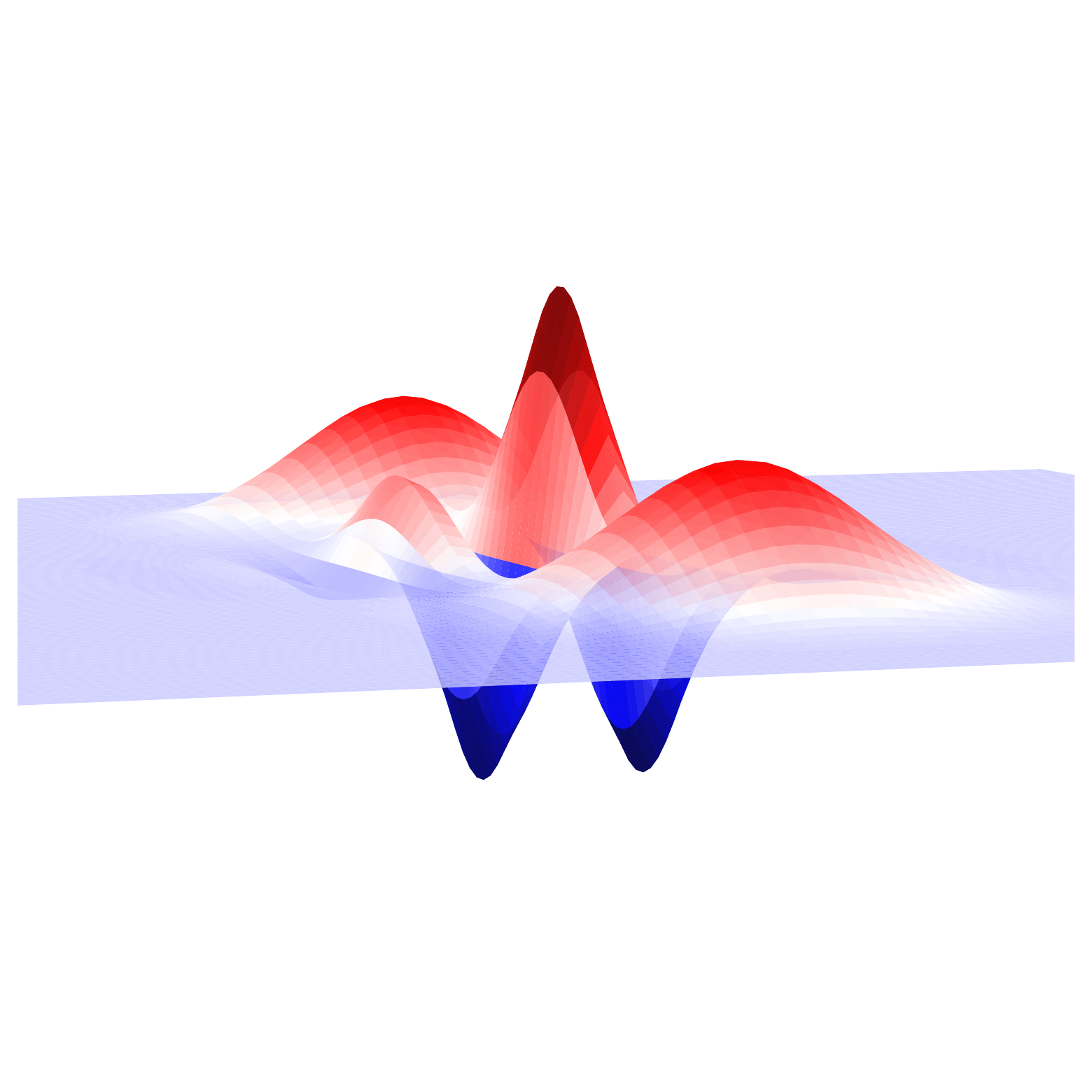}}
\subfigure[$\gamma=1,t=0.05$]{\includegraphics[width=0.3\linewidth]{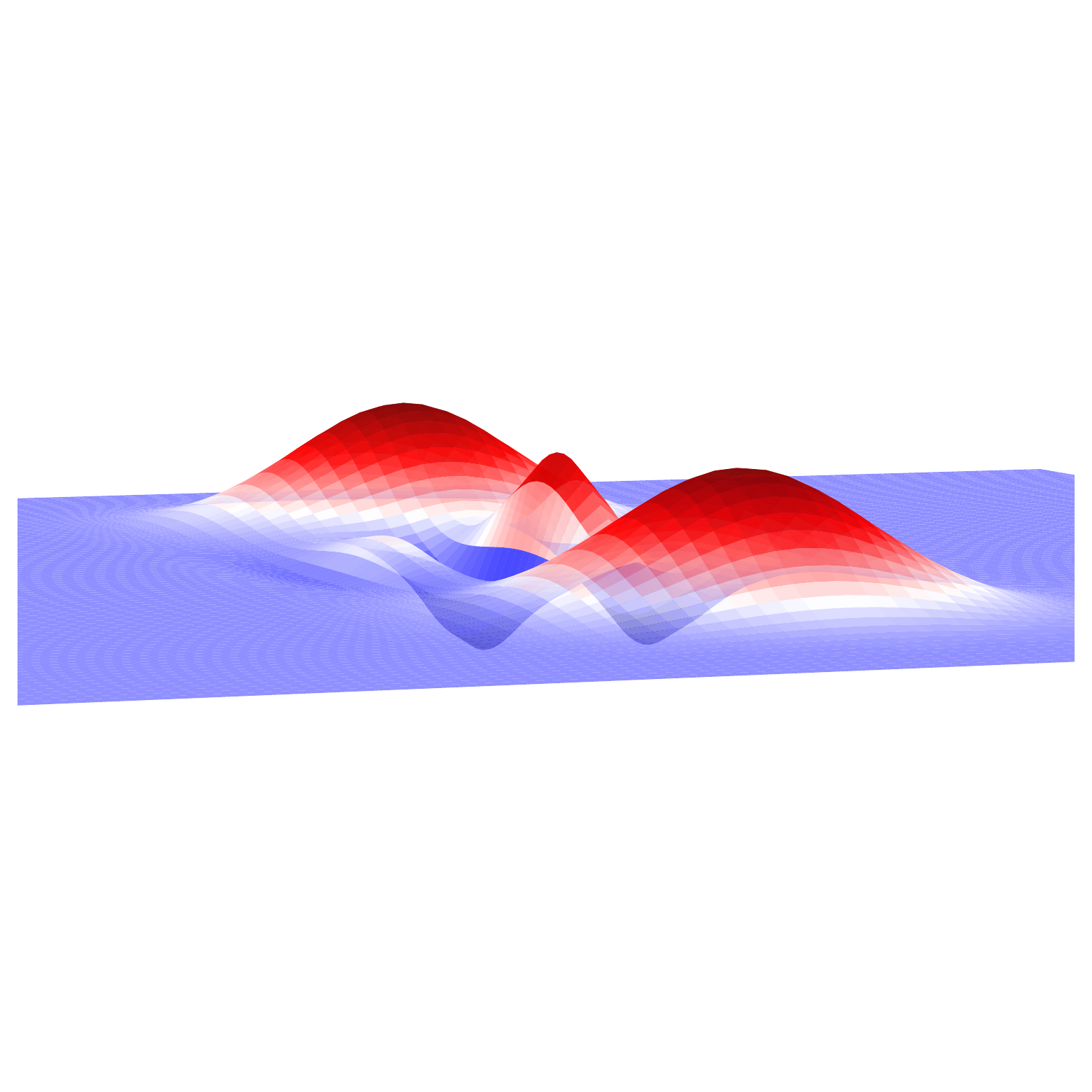}}
\subfigure[$\gamma=1,t=0.1$]{\includegraphics[width=0.3\linewidth]{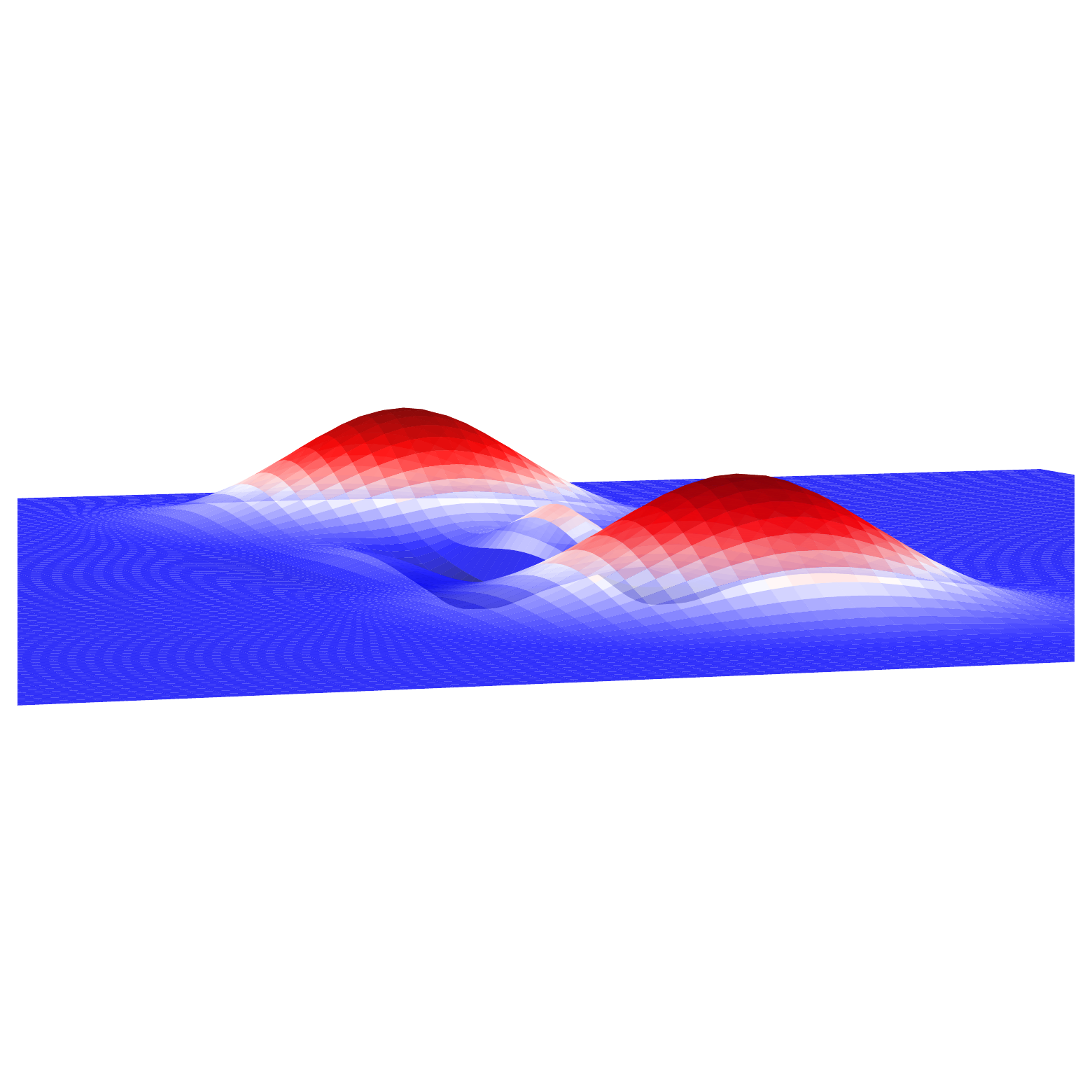}} \\
\caption{Evolution of $\rho(t)$ with initial condition given by the cat state \eqref{eq:cat-state} with $z_1=(p_1,q_1)=(0,3)$, $z_2=(p_2,q_2)=-z_1$,  $g=1$ and for $\hbar=1$, $\sigma=1$ and $\gamma=1$.  We see the characteristic two Gaussians \eqref{eq:rho-d} corresponding to the individual coherent states at $z_1$ and $z_2$ and the oscillating term \eqref{eq:rho-osc}  in the centre $z_0=(z_1+z_2)/2$ which is damped away rapidly due to the diffusive terms in the GKLS equation \eqref{eq:Lindblad-V}, a manifestation of decoherence. }\label{fig:rho_evolution3d_cat}
\end{figure}

In order to simplify the computations we have furthermore assumed that $G$ is diagonal, as in \eqref{eq:Gaussian-state-diag-G}. Now the partial Fourier transform of \eqref{eq:Gaussian-0} in $q$, see \eqref{eq:partial-FT}, gives 
\begin{equation}
w_0(p,\eta)=\ue^{\frac{\ui}{\hbar} q_0(\eta+\delta p)} \ue^{-\frac{1}{4\hbar g}(\eta +\delta p)^2}\frac{1}{\sqrt{g\pi\hbar}} \ue^{-\frac{\ui}{\hbar}p\delta q} \ue^{-\frac{1}{\hbar} \frac{1}{g} (p- p_0)^2}
\end{equation}
and following the ideas in \cite{GraeSchu11,GraeSchu12} we make an Ansatz 
\begin{equation}
w(t,p,\eta)=\ue^{\frac{\ui}{\hbar}q_0(\eta+\delta p)} \ue^{-\frac{1}{4\hbar g}(\eta +\delta p)^2}\frac{c(t)}{\sqrt{g\pi\hbar}} \ue^{-\frac{1}{\hbar} D(t)+\frac{\ui}{\hbar}\phi(t)}\ue^{-\frac{\ui}{\hbar}pQ(t)} \ue^{-\frac{1}{\hbar} \frac{1}{a(t)} (p-P(t))^2}
\end{equation}
where the parameters $c(t), D(t),\phi(t), a(t), P(t), Q(t)$ can depend on $\eta$ but not on $p$. Inserting this Ansatz into \eqref{eq:w_Lindblad} gives the following set of equations for the parameters
\begin{align}
\dot a&=2\sigma^2-\frac{\gamma^2\eta^2}{2} a^2&\text{with}\quad a(0)&=g\,\, ,\\
\dot P&=-\frac{\gamma^2\eta^2}{2} a P &\text{with}\quad P(0)&= p_0\,\, ,\\
\dot Q&=-\frac{2\sigma^2}{a}Q-\eta &\text{with}\quad Q(0)&=\delta q\,\, ,\\
\dot c&=-\frac{\sigma^2}{a} c &\text{with}\quad c(0)&=1\,\, , \label{eq:c0}\\
\dot D&=\frac{\sigma^2}{2}Q^2+\frac{\gamma^2\eta^2}{2}P^2 &\text{with}\quad D(0)&=0\,\, ,\label{eq:D}\\
\dot \phi &=\dot Q P+\eta P &\text{with}\quad \phi(0)&=0\label{eq:phi}\,\, ,
\end{align}
as we show in Appendix \ref{app:wpp}. These equations can be solved explicitly, see Appendix \ref{app:wpp} for the details, and we therefore obtain an explicit solution for $w_t(p,\eta)$ if the initial condition is Gaussian. It is convenient to express the solutions in terms of two auxiliary functions 
\begin{equation}
u(t,\omega):=\cosh(\omega t)+\frac{\omega}{\beta}\sinh(\omega t)\,\, ,\quad v(t,\omega):=\cosh(\omega t)+\frac{\beta}{\omega}\sinh(\omega t)
\end{equation}
where
\begin{equation}
\omega=\sigma\gamma\eta\,\, ,\quad\text{and}\quad \beta =\frac{2\sigma^2}{g}\,\, ,
\end{equation}
and we furthermore will use the notation
\begin{equation}
\ch (t,\omega):=\frac{\cosh(t\omega)-1}{\omega^2}\,\, \quad\text{and}\quad \sh(t,\omega):=\frac{\sinh(t\omega)}{\omega}\,\, .
\end{equation}
Notice that these functions are smooth at $\omega=0$, and we use them to make sure that we do not create apparent singularities at $\omega=0$ by careless notation. 
The solutions to the set of differential equations are then given by 
\begin{equation}
a(t)=g\frac{v(t, \omega)}{u(t, \omega)}\,\, ,\quad c(t)=\frac{1}{\sqrt{v(t,\omega)}} 
\end{equation}
and 
\begin{equation}\label{eq:P-Q}
P(\eta,t)=\frac{ p_0}{u(t,\omega)}\,\, ,\quad Q(\eta,t)=\frac{\delta q-\eta [\beta \ch(t,\omega)+\sh(t,\omega)]}{v(t,\omega)}\,\, ,
\end{equation}
and the remaining two terms are given by 
\begin{equation}
\phi(t,\eta)=\beta  p_0\frac{\eta \ch(t,\omega)-\delta q \sh(t,\omega)}{v(t,\omega)}
\end{equation}
and 
\begin{equation}
\begin{split}
D(t,\eta)&= \frac{1}{2\gamma^2}\frac{t v(t,\omega)-\sh(t,\omega)-2\beta \ch(t,\omega)}{v(t,\omega)}\\
&+\frac{\sigma^2}{2}\frac{-2\eta \delta q \sh(t, \omega)+(\delta q)^2\ch(t,\omega)}{v(t,\omega)}+\frac{\gamma^2\eta^2 p_0^2}{2}\frac{\sh(t,\omega)}{u(t,\omega)} .
\end{split}
\end{equation}

Furthermore, we can compute the Fourier transform in $p$, $\chi(t,\xi,\eta)=\int \ue^{\frac{\ui}{\hbar} p\xi} w_t(p,\eta)\,\, \ud p$, as $w_t(p,\eta)$ is Gaussian in $p$, and therefore obtain an explicit expression for the characteristic function
\begin{equation}\label{eq:char_function}
\begin{split}
\chi(t,\xi,\eta)&= c(t)\sqrt{a(t)/g}\,\, \ue^{-\frac{1}{\hbar} D(t)+\frac{\ui}{\hbar}\phi(t)}\ue^{\frac{\ui}{\hbar} q_0(\eta+\delta p)+\frac{\ui}{\hbar} P(\xi-Q)} \ue^{-\frac{1}{4\hbar g}(\eta +\delta p)^2-\frac{a}{4\hbar}(\xi-Q)^2}\\
&=\frac{1}{\sqrt{u(t,\omega)}}\ue^{-\frac{1}{\hbar} D(t,\eta)+\frac{\ui}{\hbar}\phi(t,\eta)}\ue^{\frac{\ui}{\hbar} q_0(\eta+\delta p)+\frac{\ui}{\hbar} P(\eta ,t)(\xi-Q(\eta,t))} \ue^{-\frac{1}{4\hbar g}(\eta +\delta p)^2-\frac{a}{4\hbar}(\xi-Q(\eta, t))^2}\,\, .
\end{split}
\end{equation}
This expression for the characteristic function, together with the explicit expressions for the parameters we gave above, is one of our main results.

We will discuss now some of the properties of the solutions we found. 
Let us first consider the case that $\delta p=\delta q=0$, then the Wigner function \eqref{eq:Gaussian-0} and the time evolved characteristic function \eqref{eq:char_function} represent a positive state, and we can use the characteristic function to study properties of this state. In particular, moments can be computed from derivatives of $\chi(t,\xi,\eta)$ at 
$\xi=\eta=0$, i.e., 
\begin{equation}
    \la \hat p^n\hat q^m\ra_t:=\tr\big[\hat p^n\hat q^m\hat \rho(t)\big]=\big(-\ui\hbar)^{n+m}\big[\pa_{\xi}^n\pa_{\eta}^m\chi\big](t,0,0)\,\, .
\end{equation}
Using this relation and \eqref{eq:char_function}
we find
\begin{equation}\label{eq:first-moments}
    \la \hat p\ra_t=p_0\,\, ,\quad \la \hat q\ra_t=q_0+tp_0
\end{equation}
so the momentum and position expectation values follow the internal dynamics of the system. But for the variances we obtain
\begin{align}
    \la \hat p^2\ra_t-\la \hat p\ra_t^2&=\hbar\bigg(\frac{g}{2}+\sigma^2 t\bigg)\label{eq:full-variance-p}\\
 \frac{1}{2}\la \hat p\hat q+\hat q\hat p\ra_t-\la \hat p\ra_t\la \hat q\ra_t&=\hbar\bigg(\frac{g}{2}t+\frac{\sigma^2}{2} t^2\bigg)\label{eq:full-variance-pq}\\  
 \la \hat q^2\ra_t-\la \hat q\ra_t^2&=\hbar\bigg(\frac{1}{2g}+\gamma^2 p_0^2t+\frac{g}{2}t^2+\frac{\sigma^2}{3}t^3\bigg)+\hbar^2\frac{\gamma^2}{2}(g t+\sigma^2 t^2)\label{eq:full-variance-q}
\end{align}
and we see that as expected they depend on the Lindblad operators \eqref{eq:two-Lindblads} as can be seen by the parameters $\sigma^2$ and $\gamma^2$. How the variances depend on the coupling to the environment can be understood in terms of the corresponding vector fields $V_0$ and $V_1$, as depicted in Figure \ref{fig:vectorfields_plot}. Since the vector field $V_1$ is constant the parts in the variances proportional to $\sigma^2$ do not depend on where the initial state is concentrated. But the vector  field $V_0$ depends on $p$ and we see that the corresponding contribution to the position variance depends on the initial momentum $p_0$, and furthermore the variance for $\hat q$ has an $\hbar^2$ term. If $p_0=0$ the order $\hbar$ terms in the variance have no contribution from $V_0$, since $V_0$ vanishes at $p=0$, and the contribution becomes only visible in order $\hbar^2$. We will see in Section \ref{sec:semi-approx} that the standard semiclassical approximation does not detect this higher order contribution. 

For $\gamma\to 0$ the state converges to a Gaussian state with covariances given by \eqref{eq:full-variance-p}, \eqref{eq:full-variance-pq} and \eqref{eq:full-variance-q}, which together with the first moments \eqref{eq:first-moments} determines the state uniquely.

In Figure \ref{fig:rho_evolution_p0_0} we give the evolution of an initial Gaussian state under the Lindblad evolution for different values of $\gamma$. For $\gamma=0$ the state stays Gaussian, but the variance evolves. For $\gamma=0.5$ and $\gamma=1$ the state still stays localised but we see that it slowly develops non-Gaussian features, in particular in the tails.

\begin{figure}[t!]
\centering
\subfigure[$\gamma=0,t=0$]{\includegraphics[width=0.3\linewidth]{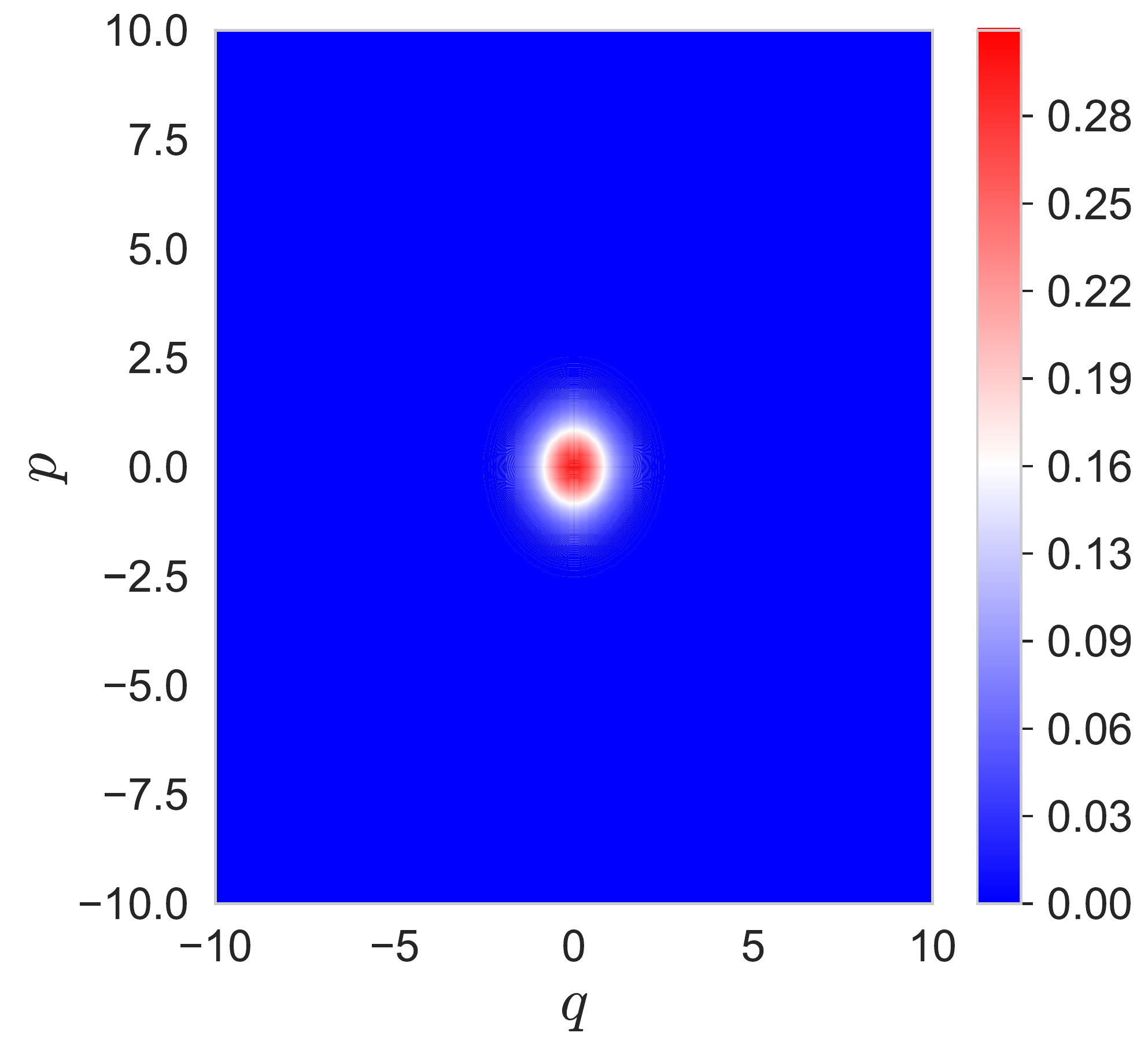}}
\subfigure[$\gamma=0,t=0.5$]{\includegraphics[width=0.3\linewidth]{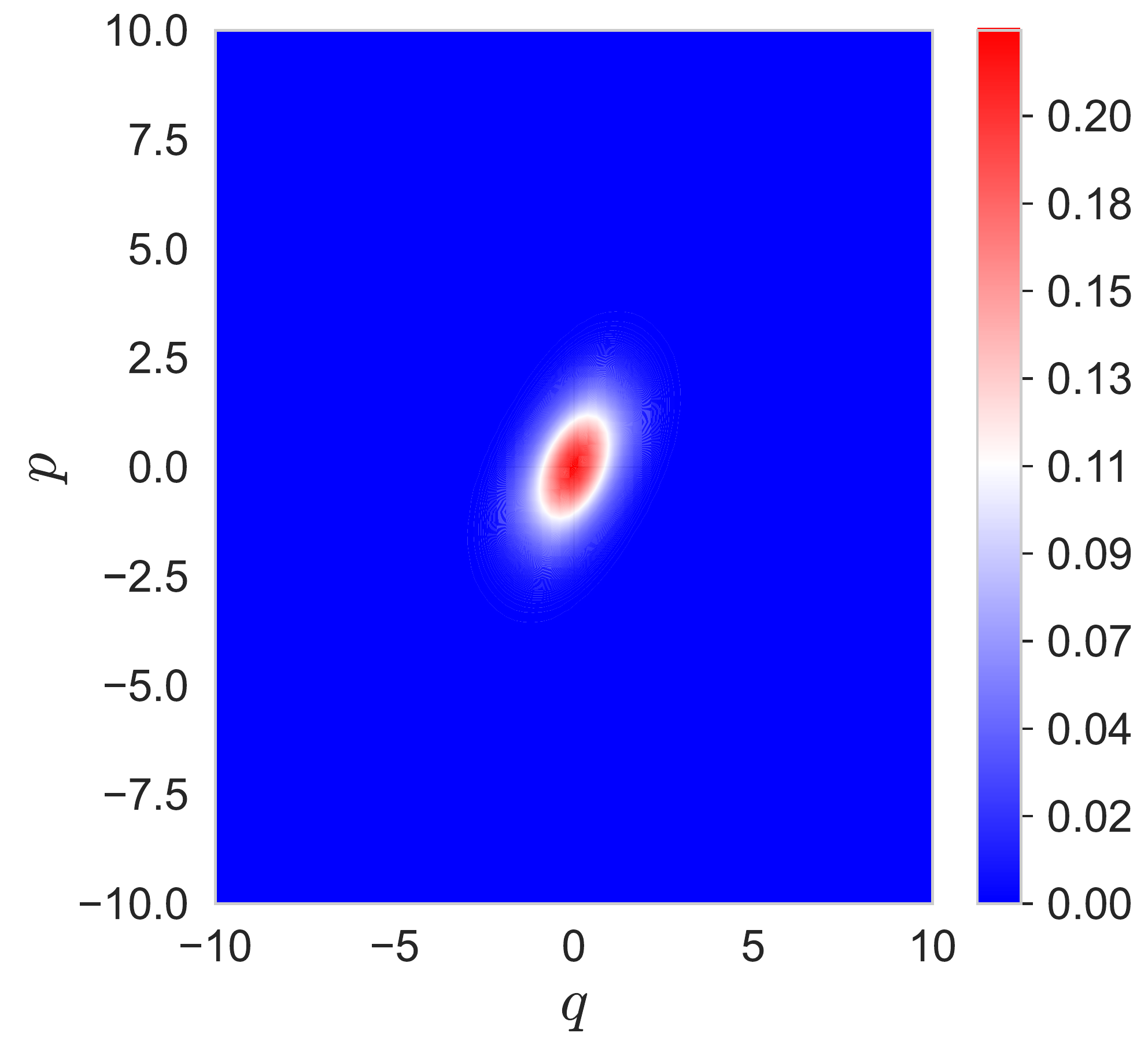}}
\subfigure[$\gamma=0,t=1$]{\includegraphics[width=0.3\linewidth]{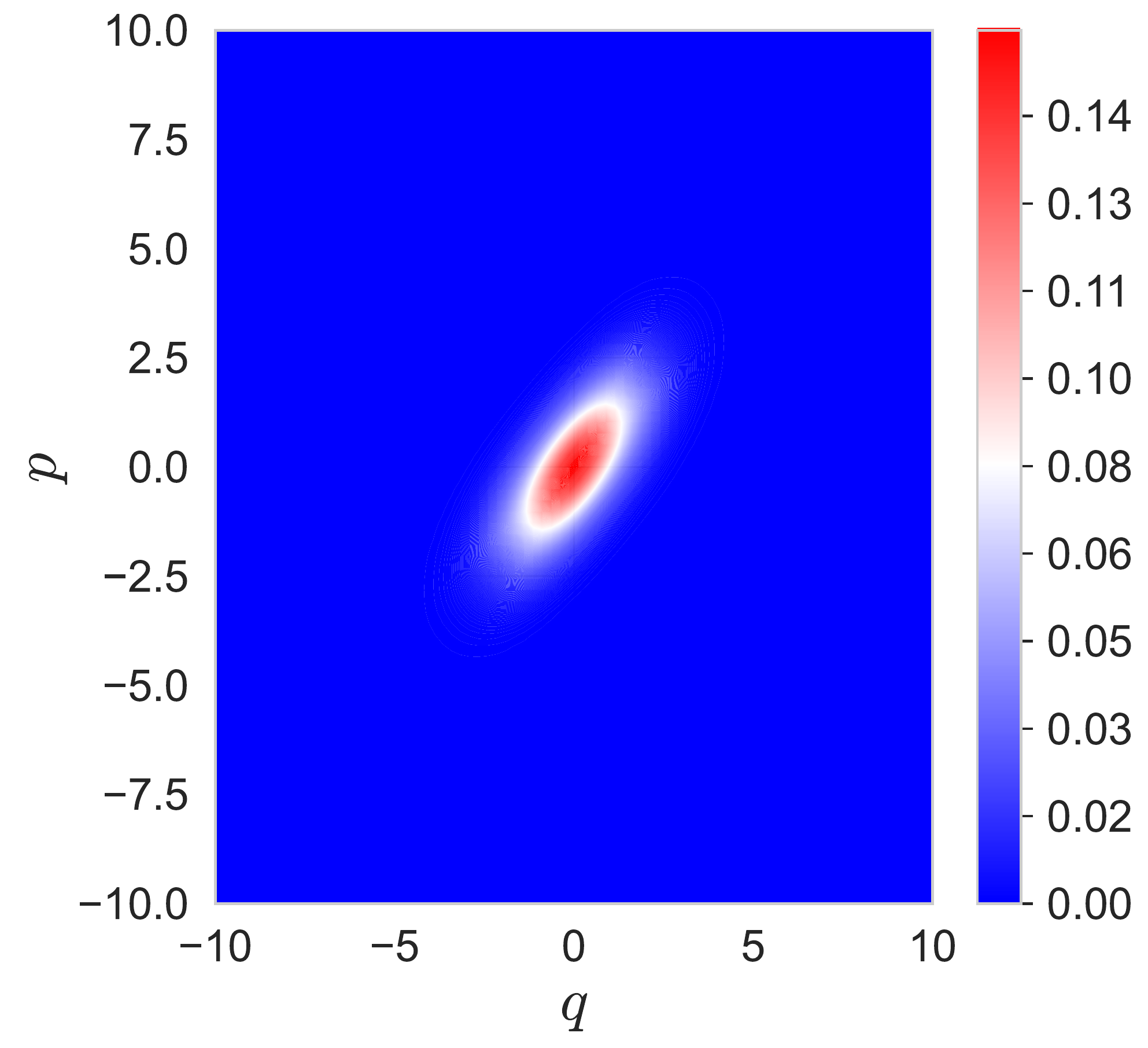}} \\
\subfigure[$\gamma=0.5,t=0$]{\includegraphics[width=0.3\linewidth]{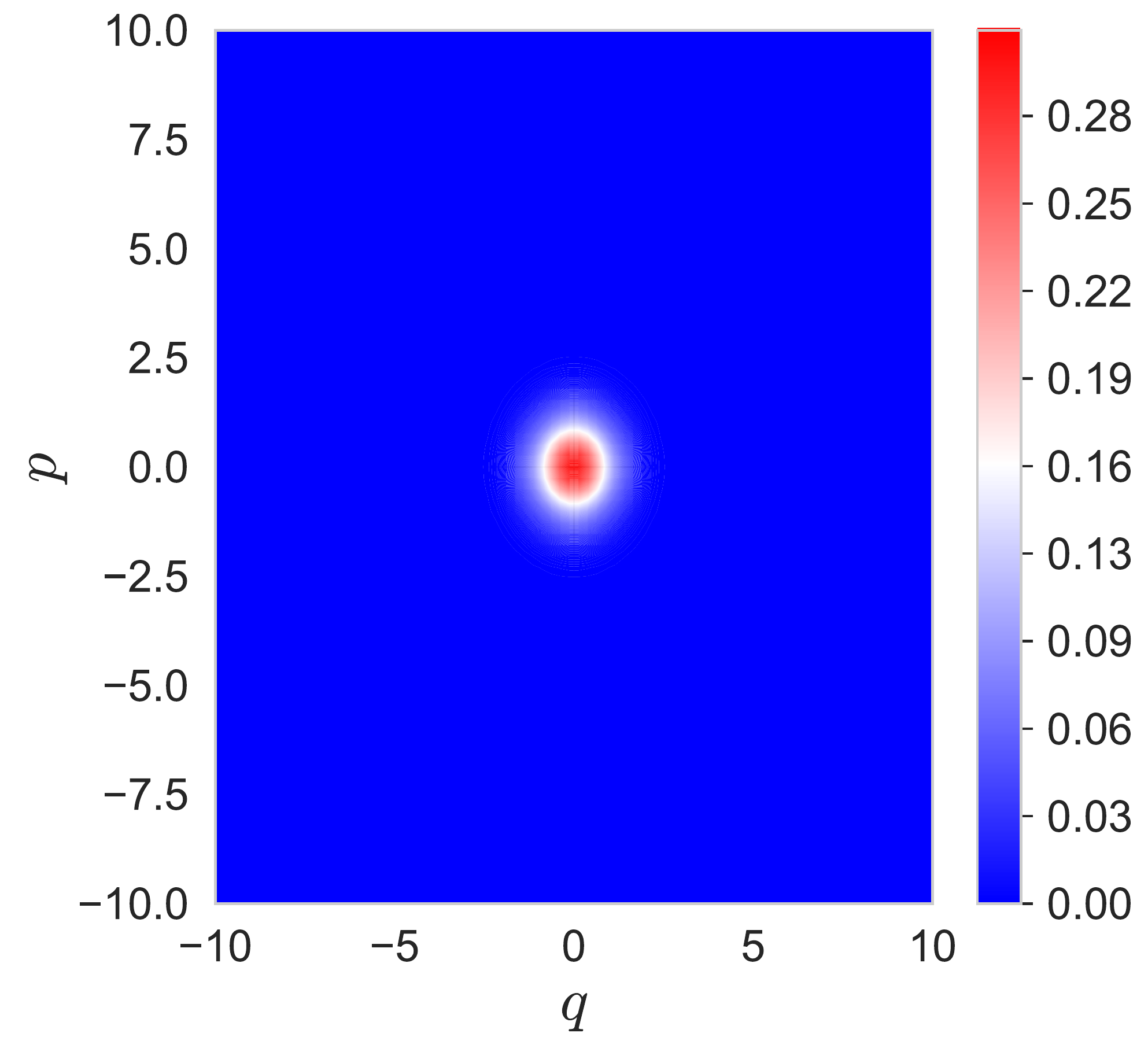}}
\subfigure[$\gamma=0.5,t=0.5$]{\includegraphics[width=0.3\linewidth]{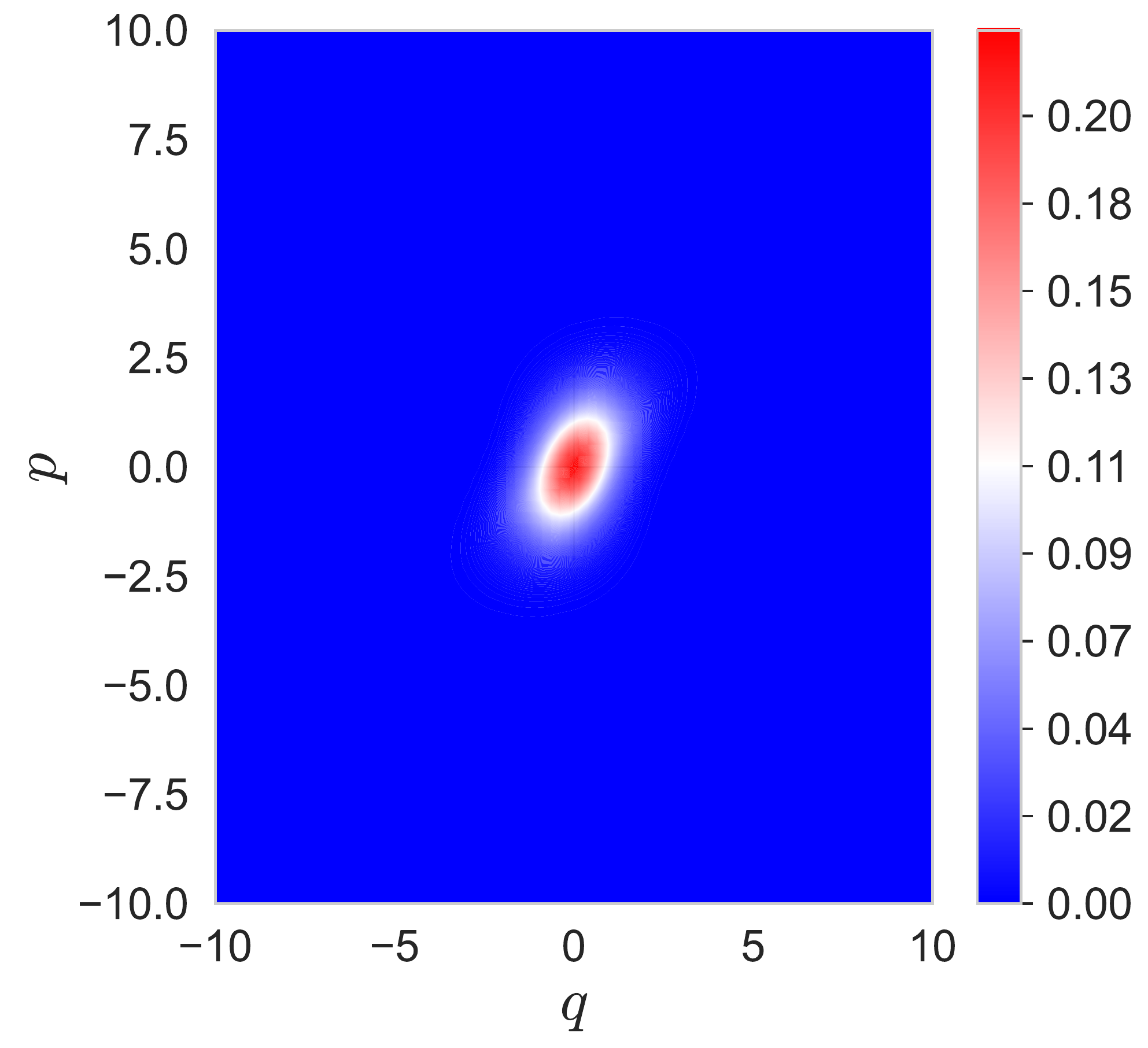}}
\subfigure[$\gamma=0.5,t=1$]{\includegraphics[width=0.3\linewidth]{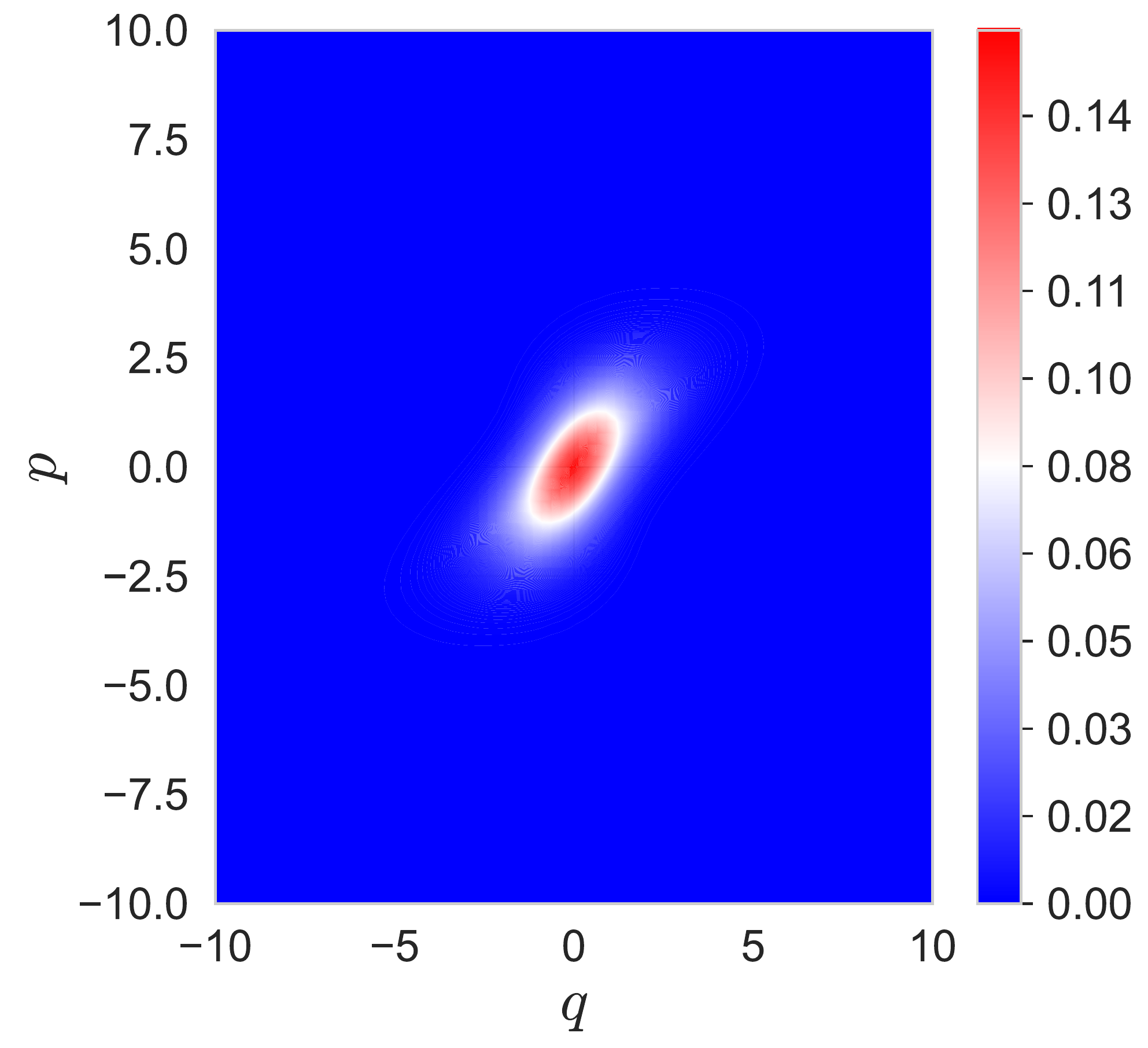}} 
\\
\subfigure[$\gamma=1,t=0$]{\includegraphics[width=0.3\linewidth]{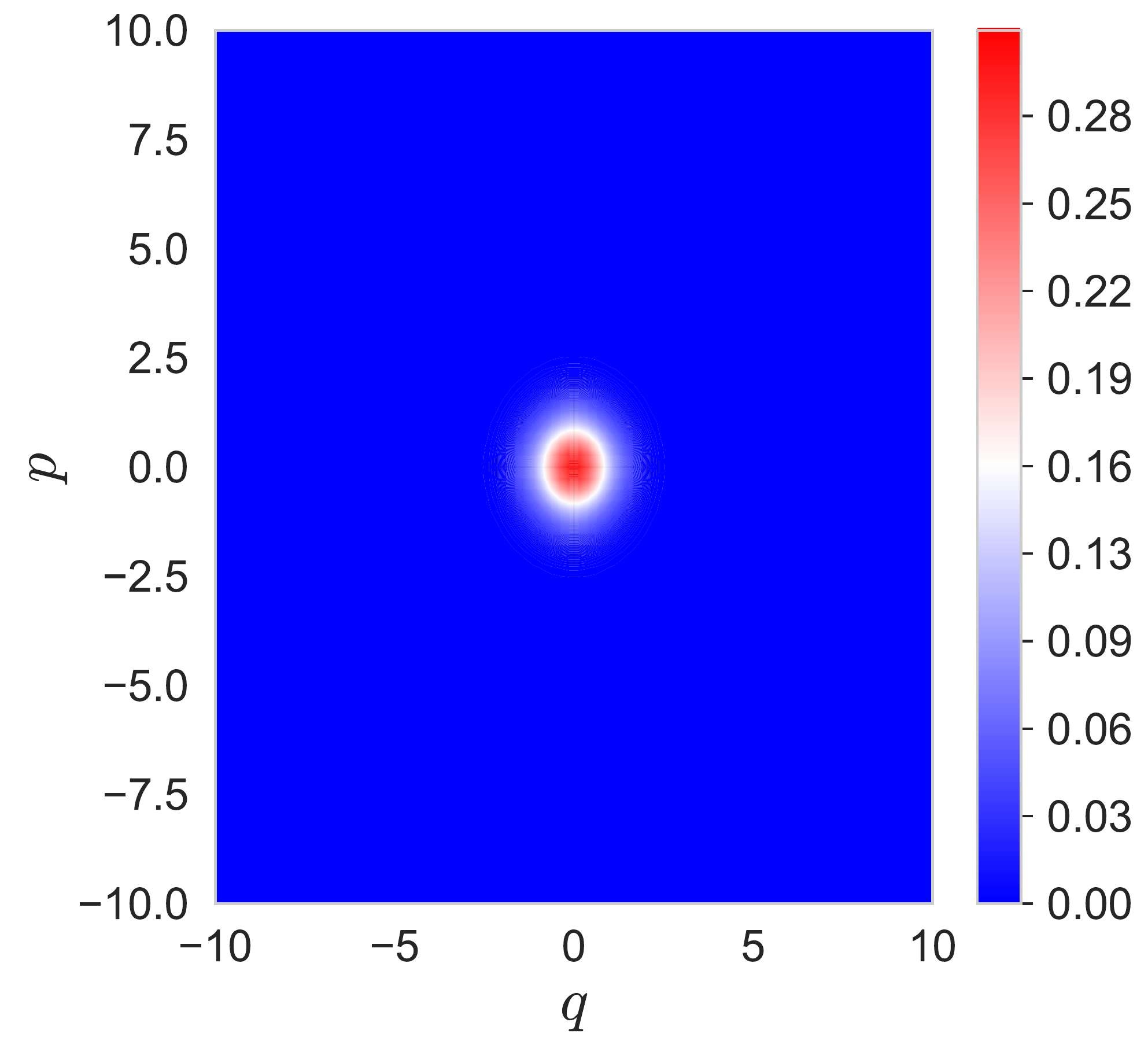}}
\subfigure[$\gamma=1,t=0.5$]{\includegraphics[width=0.3\linewidth]{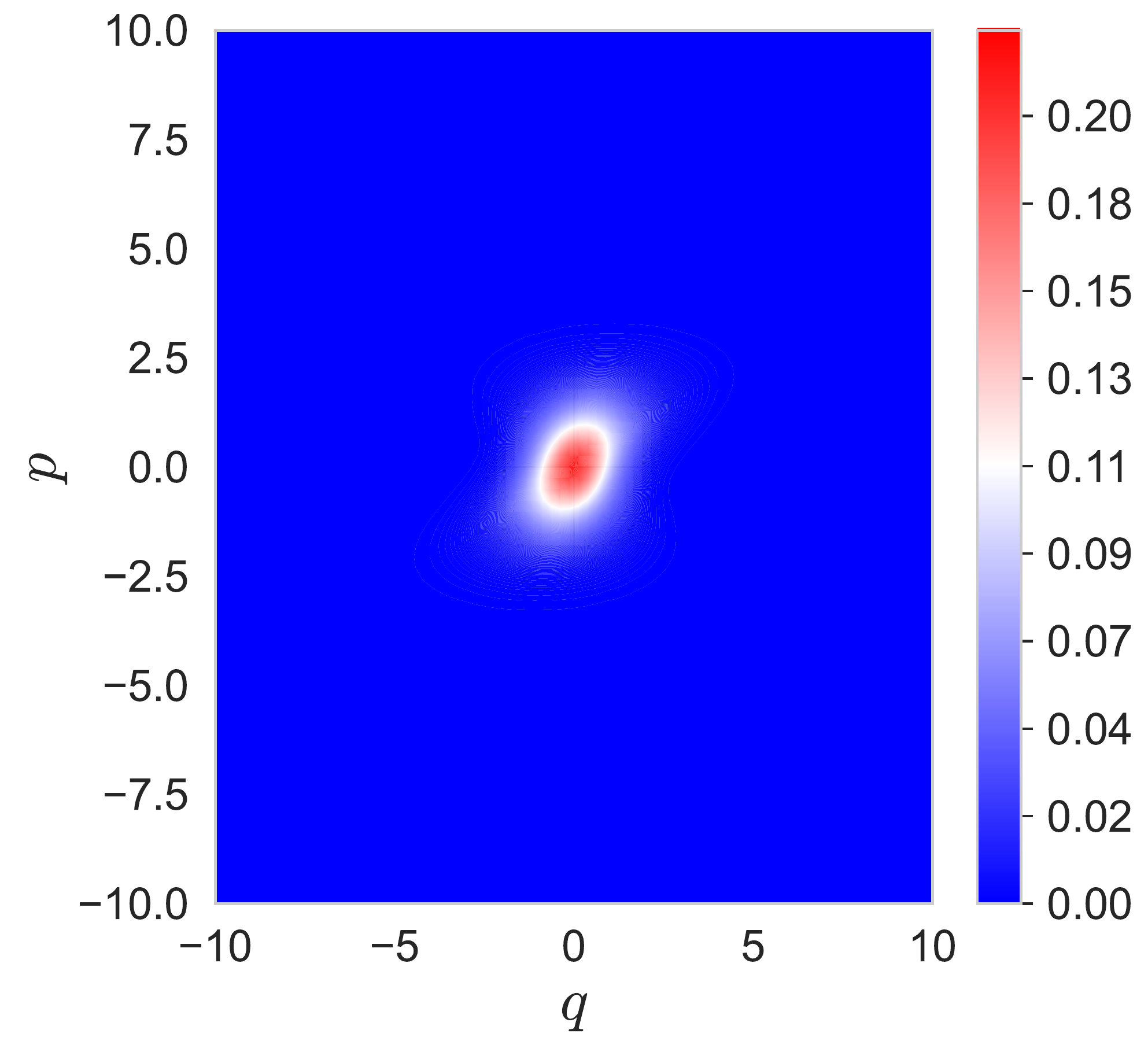}}
\subfigure[$\gamma=1,t=1$]{\includegraphics[width=0.3\linewidth]{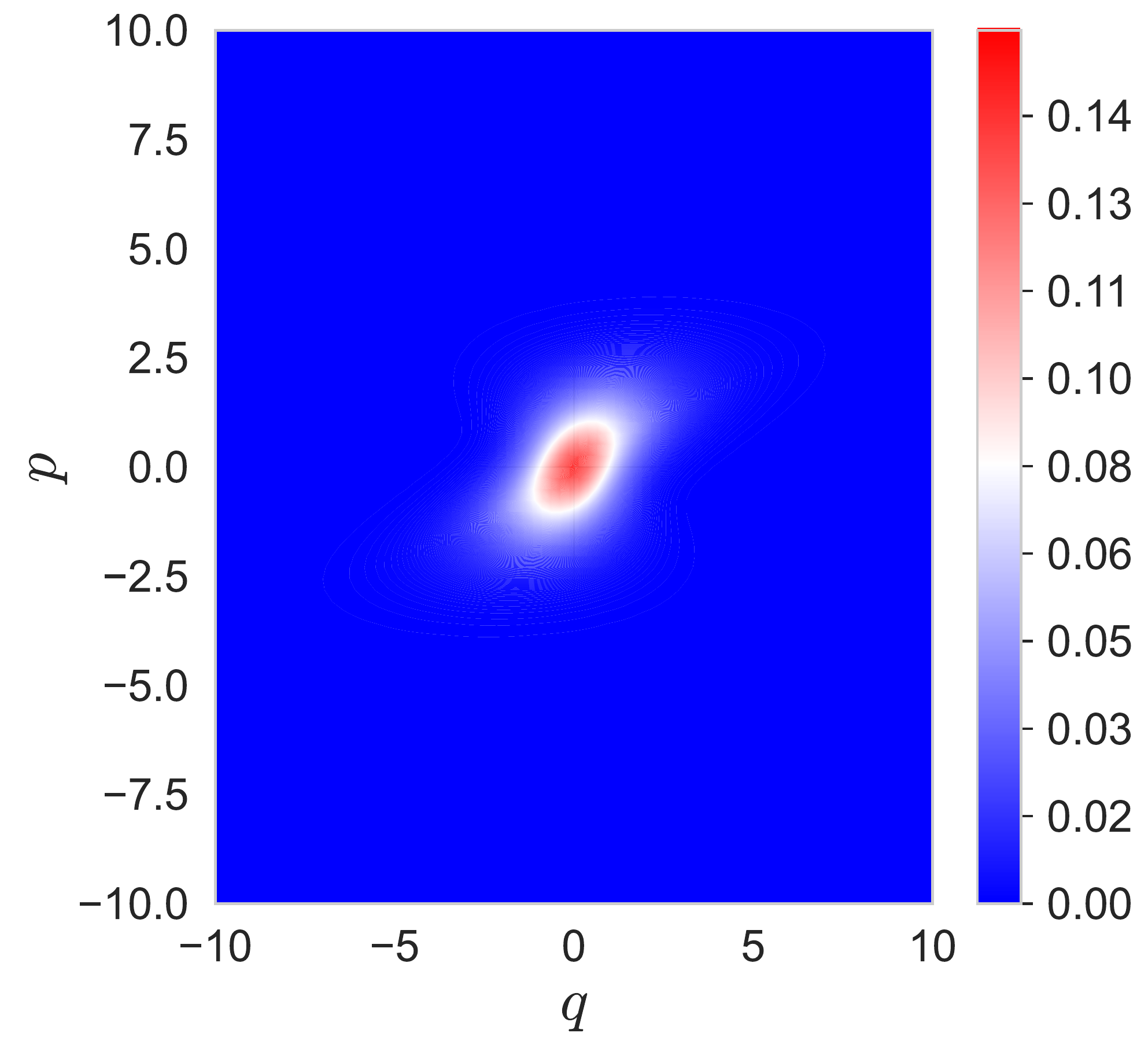}}
\caption{Evolution of an initially Gaussian state $\rho$ of the form \eqref{eq:Gaussian-0} with $q_0=p_0=0$, $g=0$ and $\hbar=1$, for $\sigma=1$ and three different $\gamma$ values. Top panel: $\gamma=0$, the evolution is Gaussian, the centre is not moving but the variance is evolving. Middle panel: $\gamma=0.5$ , Bottom panel: $\gamma=1$ the state stays localised but starts deviating from a Gaussian.}\label{fig:rho_evolution_p0_0}
\end{figure}

Let us now consider  the case that $(\delta p, \delta q)\neq 0$. In this situation we expect decoherence to cause $\chi(t)$ to be suppressed rapidly. We will quantify this decay by computing $\tr[\hat \rho^*\hat\rho]$.

Recall that $\tr[\hat \rho^*\hat \rho]=\frac{1}{(2\pi\hbar)^2}\iint \abs{\chi(t, \xi,\eta)}^2\, \ud \xi\ud \eta$, where we can insert the expression \eqref{eq:char_function} and evaluate the $\xi$-integral to obtain
\begin{equation}
\tr [\hat \rho^*\hat \rho]=\frac{1}{(2\pi\hbar)^{3/2}}\int \frac{1}{\sqrt{gv(t,\eta)}}\ue^{-\frac{1}{\hbar}[2D(t, \eta)+(\eta-\delta p)^2/(2g)]}\, \ud \eta\,\, .
\end{equation}
From the differential equation for $D$, \eqref{eq:D}, we see that $D(t, \eta)\geq 0$ and that $D(t, \eta)$ is non-decreasing as a function of $t$, hence if we can establish a lower bound for small $t$ it will as well be valid for larger times. We have as well the Gaussian factor $\ue^{-\frac{1}{4g \hbar}(\eta-\delta p)^2}$ in the expression \eqref{eq:char_function} for the characteristic function, and therefore $\chi(t, \xi, \eta)$ will be small away from a neighbourhood of $\eta=\delta p$, in particular if $\hbar$ is small.

To understand the onset of decoherence let us determine the leading order terms in the Taylor expansion of $D$ in $t$, where from \eqref{eq:P-Q} we find 
\begin{equation}
P(t)= p_0(1+O(t))\,\, \quad \text{and} \quad Q(t)=\frac{\delta q+\eta t+O(t^2)}{1+\beta t}=\delta q+(\eta -\beta\delta q)t+O(t^2) .
\end{equation}
Inserting this into \eqref{eq:D} we obtain
\begin{equation}\label{eq:D-short-time}
D(t, \eta)=\begin{cases}\big[\frac{\sigma^2}{2}(\delta q)^2+\frac{\gamma^2\eta^2}{2}p_0^2\big]t +O(t^2)& (\delta q, p_0)\neq 0\\
\frac{\sigma^2\eta^2}{6}\, t^3 +O(t^4) & \delta q= p_0=0\end{cases}\,\, .
\end{equation}
We observe here two different regimes for the onset of decoherence, we have a linear growth of $D(t)$ in $t$ for most cases, but if $p_0=0$ and $\delta q=0$, but $\delta p\neq 0$, the onset of decoherence is slower for small times, it is proportional to $t^3$ instead of $t$. This can be understood in terms of the geometry of the vector fields which describe diffusion and transport (Figure \ref{fig:vectorfields_plot}). If $ p_0\neq 0$, then there are two linearly independent vector fields describing diffusion, and hence the onset  of decoherence is immediate. But if $ p_0=0$ and $\delta q=0$, then one of the vector fields vanishes, and the other is orthogonal to the oscillations of the Wignerfunction, and only their commutator couples to the oscillations, hence the onset of decoherence is delayed. This is related to the H{\"o}rmander condition for hypoelliptic operators, \cite{Hor67,Agr20}. We see the same phenomenon in the Gaussian channel case if $\gamma=0$, the only difference if $\gamma>0$ is that if $ p_0\neq 0$ then  decoherence is enhanced by the presence of the additional Lindblad term. 

For the short time approximation we can assume that $v(t,\eta)=v(0,\eta)=1$ and then we can evaluate the integral if we approximate $D(t,\eta)$ by the leading order terms in $t$ in \eqref{eq:D-short-time} and obtain
\begin{equation}\label{small_time_trace_rho}
    \frac{\tr[\hat\rho^*(t)\hat\rho(t)]}{\tr[\hat \rho_0^*\hat\rho_0]}=\begin{cases} \ue^{-\frac{1}{\hbar}\,t [\sigma^2(\delta q)^2 +\gamma^2  p_0^2(\delta p^2]+O(t^2)}(1+O(t)) & (\delta q,\bar p)\neq 0\,\, ,\\
    \ue^{-\frac{1}{\hbar}\, t^3\frac{\sigma^2}{3}(\delta p)^2+O(t^4)}(1+O(t)) &\delta q=\bar p=0\,\, .
    \end{cases}
\end{equation}
Notice that we have for self-adjoint Lindblad operators that 
\begin{equation}
 \frac{\ud  \tr[\hat\rho^*(t)\hat\rho(t)]}{\ud t}=-\sum_j  \tr\big[[L_j,\rho(t)]^*[L_j,\rho(t)]\big]\leq 0\,\, ,
\end{equation}
see \cite{AliLen07}, and therefore the short time estimate in \eqref{small_time_trace_rho} implies that  $\tr[\hat\rho^*(t)\hat\rho(t)]$ remains exponentially small in $1/\hbar$ for large times, too.

\begin{figure}[t!]
\begin{center}\includegraphics[width=0.7\linewidth]{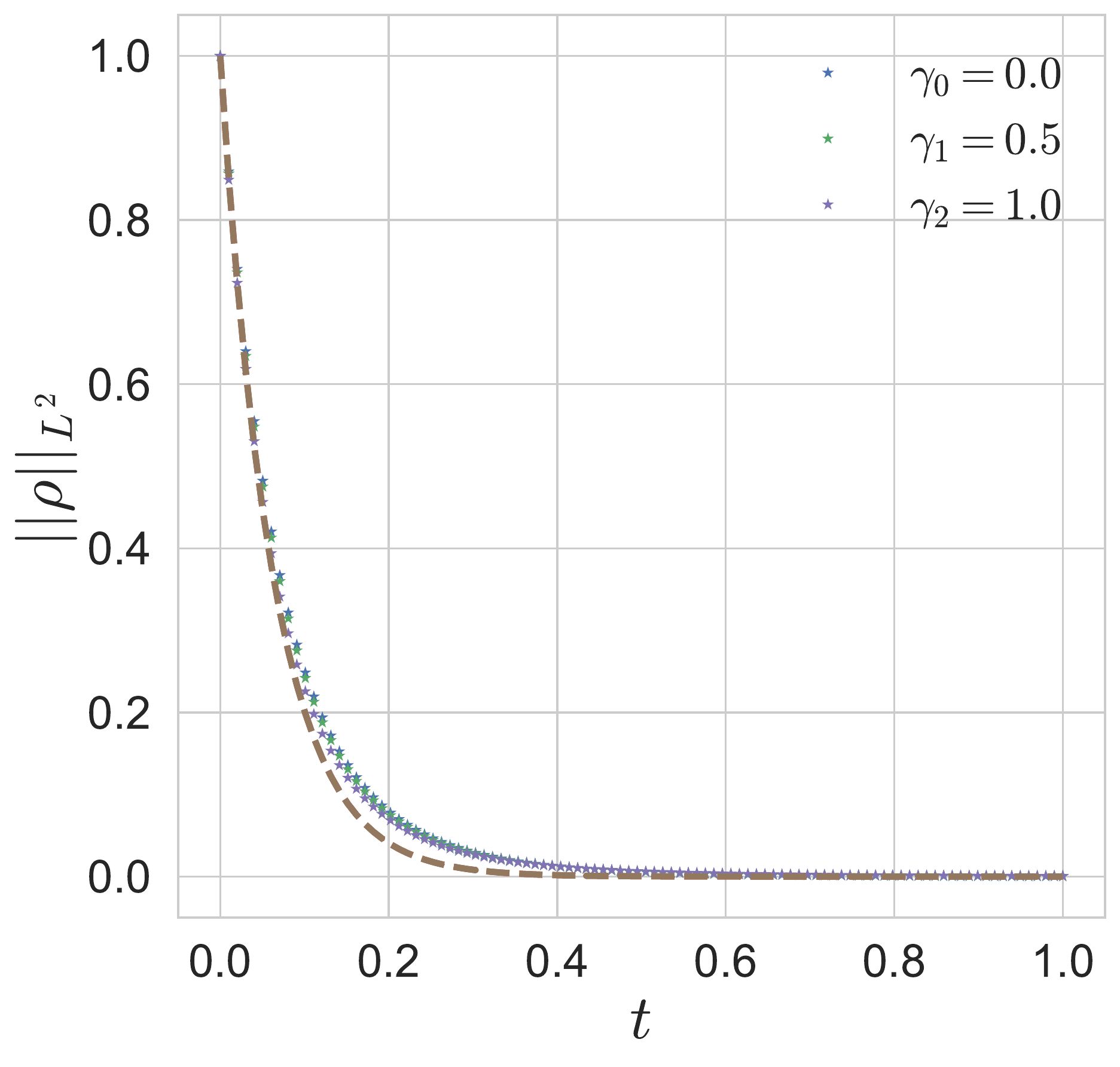}
\end{center}
\caption{Rate of Decoherence. Decay of $\tr[\hat \rho^*(t)\hat\rho(t)]/\tr[\rho_0^*\rho_0]$ for an oscillatory initial state $\rho_0$ in the form \eqref{eq:Gaussian-0} with $p_0=q_0=0$ and $\delta p=2, \delta q=4$ for $\sigma=1$, $\hbar=1$ and different values of $\gamma$: $\gamma=0$, $\gamma=0.5$, and $\gamma=1$. 
We compare this with the short time prediction \eqref{small_time_trace_rho} (dashed line).  }\label{fig:S_l2norm_evolution_p0_0}
\end{figure}

In Figure \ref{fig:S_l2norm_evolution_p0_0} we have computed the value of 
$\frac{\tr[\hat\rho^*(t)\hat\rho(t)]}{\tr[\hat \rho_0^*\hat\rho_0]}$
for an oscillatory initial $\rho_0$ with $\delta p\neq 0$ and $\delta q$ and find excellent agreement with \eqref{small_time_trace_rho} over a range of values of $\gamma$. In particular the onset of decoherence is captured very accurately, the small deviations for slightly larger times  are most likely due to the short time approximations we made in the derivation of \eqref{small_time_trace_rho}.

\subsection{Propagation of general states}

We have focused so far on the case that the initial state is a Gaussian, or a sum of Gaussians. But due to the fact that the equation for $w(t)$, \eqref{eq:w_Lindblad}, is quadratic in $p$ and derivatives in $p$ we can write down an explicit formula for the propagator $K(t,p,p')$ of \eqref{eq:w_Lindblad} which is defined by the relation 
\begin{equation}
w(t,p)=\int K(t,p,p')w_0(p')\, \ud p'\,\, ,
\end{equation}
where $w_0(p)$ is the initial value at $t=0$ of the partial Fourier transform of the Wigner function of the state. We will follow the ideas in \cite{Hor95} and make an Ansatz for $K$ as the Weyl-quantisation of a function 
$A(t, p,\xi)$, i.e.,
\begin{equation}\label{eq:heat-kernel}
K(t, p,p')=\frac{1}{2\pi\hbar}\int \ue^{\frac{\ui}{\hbar}(p-p')\xi}A\bigg(t, \frac{p+p'}{2}, \xi\bigg)\ud\xi\,\, 
\end{equation}
and then the formalism of semiclassical analysis gives for $A$ the equation
\begin{equation}
\ui\hbar \pa_t A=H\# A\,\, ,\quad\text{where}\quad H\# A=\sum_{n=0}^{\infty}\frac{\hbar^n}{2^nn!}H(\overleftarrow{\nabla} \cdot\Omega \overrightarrow{\nabla})^nA\,\, 
\end{equation}
where $H$ is given by \eqref{eq:mod-Ham}. Since $H$ is quadratic the sum for $H\#A$ will terminate after $n=2$. We will also use that the product is invariant under linear canonical transformations (see, for instance \cite{Zwo12}), i.e., if $\varphi:\R^2\to\R^2$ is a linear canonical transformation and $A_{\varphi}:=A\circ \varphi$, $H_{\varphi}=H\circ \varphi$, then 
$(H\#A)\circ \varphi=H_{\varphi}\#A_{\varphi}$. We can use this observation to simplify $H$ by letting $\eta\neq 0$ and choosing $\varphi(\xi, p)=(\lambda \xi, p/\lambda +p_0)$ with 
$p_0=-\ui/(\gamma^2\eta)$ and $\lambda^2=\gamma\eta/\sigma$. Then, a short calculation gives that for $H$ in \eqref{eq:mod-Ham}
\begin{equation}
H_{\varphi}(\xi,p)=-\ui\frac{\omega}{2}(\xi^2+p^2)+\frac{\ui}{2\gamma^2}\,\, ,
\end{equation}
where $\omega=\sigma\gamma\eta$, hence for $\eta\neq 0$ the Hamiltonian is canonically equivalent to a complex harmonic oscillator. The solution to $\ui\hbar\pa_t A_{\varphi}=H_{\varphi}\#A_{\varphi}$ with initial condition $A_{\varphi}(t=0)=1$ is known (see, for instance \cite{Hor95}) and given by 
\begin{equation}
A_{\varphi}(t,\xi,p)=\ue^{-\frac{1}{2\gamma^2\hbar}[t-\frac{2\tanh(t\hbar\omega/2)}{\omega}]}\ue^{-\frac{\ui}{\hbar}\frac{\tanh(t\hbar\omega/2)}{\omega}H_{\varphi}(\xi,p)}\,\, .
\end{equation}
And by composing with the inverse of $\varphi$ we obtain
\begin{equation}\label{eq:symbol-heat-kernel}
A(t,\xi,p)=\ue^{-\frac{1}{2\gamma^2\hbar}[t-\frac{2\tanh(t\hbar\omega/2)}{\omega}]}\ue^{-\frac{\ui}{\hbar}\frac{\tanh(t\hbar\omega/2)}{\omega}H(\xi,p)}\,\, .
\end{equation}
where $H$ is given by \eqref{eq:mod-Ham} and $\omega=\sigma \gamma\eta$. Although in the derivation we made the assumption $\eta\neq 0$, the final result can be extended to 
$\eta=0$ by continuity. This general representation of the non-Gaussian channel is another of the main results of this paper. 

Notice that \eqref{eq:symbol-heat-kernel} is quadratic in $\xi$ so we could perform the $\xi$-integral in \eqref{eq:heat-kernel} and obtain an explicit formula for the propagator $K(t,p,p')$ which would be similar to the Mehler formula and would give us the analog of the results for the heat-kernel of the Grushin operator, \cite{Chang15}, but we follow the philosophy of H{\"o}rmander in \cite{Hor95} that the Weyl symbol gives a more natural representation of the propagator.  



\section{Approximations}

In the previous section we derived an explicit representation of the  characteristic function of a non-Gaussian Quantum Channel acting on a Gaussian state. One application of this formula is that it allows us to test the accuracy of approximations. In this section we will look at two natural approximations, the semiclassical approximation which can be applied to general Lindblad equations, and the small $\gamma$ expansion, where we look at the first order correction of \eqref{eq:Lindblad-double-commutator} to the Gaussian Channel defined by $\gamma=0$.

\subsection{Semiclassical Approximation}\label{sec:semi-approx}

In \cite{GraeLongPlasSchu18}, see as well \cite{BroOzo10}, the authors developed a semiclassical approximation for the Lindblad evolution if the Wigner function of the initial state is a (linear combination of) localised Gaussians of the form
\begin{equation}\label{eq:gen-initial-cond}
\rho(x)=\frac{1}{(\pi\hbar)^n\sqrt{\det G }}\ue^{-\frac{1}{\hbar}(x-X)\cdot G^{-1}(x-X)+\frac{\ui}{\hbar}x\cdot Y}
\end{equation}
where $X,Y\in\R^n$ are parameters, and $G$ is a positive symmetric matrix which satisfies the Robertson-Schr{\"o}dinger uncertainty relation $G+\ui\Omega\geq 0$. Notice that for $n=1$ this initial state matches \eqref{eq:Gaussian-0} if we choose in \eqref{eq:gen-initial-cond} $X=(p_0,q_0)$, $Y=(-\delta q, \delta p)$ and $G=\begin{pmatrix} g & 0\\ 0 & 1/g\end{pmatrix}$.  The main idea in \cite{GraeLongPlasSchu18} is to interpret the evolution equation for the Wigner function as a Schr{\"o}dinger equation with non-Hermitian Hamiltonian given by 
\begin{equation}\label{eq:K}
K(x,y)=-y\cdot Fx-\frac{\ui}{2}\big[(l\cdot y)^2+(y\cdot Fx)^2\big]
\end{equation}
which can then be solved using the methods developed for non-Hermitian propagation in \cite{GraeSchu11,GraeSchu12}. Here $F=\Omega Q$ is the Hamiltonian map of $H(x)=\frac{1}{2} x\cdot Qx$ and $l$ is related to $L(x)=l\cdot \Omega x$, and $y$ is the momentum variable dual to $x$. For the case $Q=\begin{pmatrix}1 & 0\\0 & 0\end{pmatrix}$ and $l=(0,1)$ the non-Hermitian Schr{\"o}dinger equation $\ui\hbar \pa_t \rho=K(x,\hat x)\rho$ gives \eqref{eq:Lindblad_model-case}.

Let us first consider the case that $Y=0$, then in the leading order semiclassical approximation the state stays in the form \eqref{eq:gen-initial-cond} and the parameters $X$ and $G$ satisfy the equations
\begin{align}
\dot X&=\Omega \nabla H(X)\label{eq:X-dot}\\
\dot G&=\Omega H''(X) G-GH''(X) \Omega+2\Omega^TD(X)\Omega\label{eq:G-dot}
\end{align}
where $H''(X)$ is the Hessian of $H$ at $x=X$ and $D(X)=\sum_{k}\nabla L_k(X)\nabla L_k(X)^T$, see \cite{GraeLongPlasSchu18}. Here we have assumed in addition that all the Lindblad operators $L_k$ are Hermitian, and in \cite{GraeLongPlasSchu18} the equation is given for $G^{-1}$ instead of $G$. In our situation the Hamiltonian is quadratic, $H=\frac{1}{2} x\cdot Qx$, and then the first equation, \eqref{eq:X-dot}, is solved by 
\begin{equation}
X(t)=R_tX_0\,\, ,\quad\text{where}\quad R_t=\ue^{t\Omega Q}\,\, 
\end{equation}
and inserting  an ansatz $G(t)=R_t \Lambda(t)R_t^T$ into the second equation, \eqref{eq:G-dot}, yields
\begin{equation}
\dot \Lambda=2R_{-t}\Omega^TD(X(t))\Omega R_{-t}^T=2\sum_k R_{-t}\Omega\nabla L_k(X)[R_{-t}\Omega\nabla L_k(X)]^T\,\, .
\end{equation}
Here we have used that $D$ is a sum of all the Lindblad operators, and so we can integrate each term separately. If $L_k(x)=l_k \cdot \Omega x$ is linear, then $R_{-t}\Omega\nabla L_k(X)=R_{-t} l_k$ is independent of $x$ and the contribution will be identical to the Gaussian Channel case. If $L_k=\alpha H$, with $H=\frac{1}{2}x\cdot Qx$ quadratic, then $R_{-t}\Omega Q=\Omega QR_{-t}$, as $R_t=\ue^{t\Omega Q}$, and hence 
\begin{equation}
R_{-t}\Omega \nabla L(X(t))=\alpha R_{-t}\Omega QR_t X_0=\Omega QX_0
\end{equation}
is time independent and can be easily integrated. So for the case that $L_1=\sigma l\cdot \Omega x$ and $L_2(x)=\gamma H(x)=\frac{\gamma}{2}x\cdot Qx$ we get 
\begin{equation}
\Lambda_t=G_0+2\sigma^2\int_0^t R_{-s}l[R_{-s}l]^T\, \ud s+t\gamma^2 FX_0[FX_0]^2\,\,\ud s\,\,  , 
\end{equation}
where again, $F=\Omega Q$. For our case we get with  $l=(0,1)^T$ and $F=\begin{pmatrix} 0 & 1 \\ 0 & 0\end{pmatrix}$ 
\begin{equation}\label{eq:Lambda_t}
\Lambda_t=G_0+2\sigma^2\begin{pmatrix} \frac{1}{3}t^3 & -\frac{1}{2}t^2 \\-\frac{1}{2} t^2 & t\end{pmatrix}+2t \gamma^2 p_0^2 \begin{pmatrix} 1 & 0\\0 & 0\end{pmatrix} \,\, ,
\end{equation}
where $X_0=(q_0,p_0)^T$ and $R_t=\begin{pmatrix} 1 & t\\ 0 & 1\end{pmatrix}$, and for $G_0=\begin{pmatrix}1/g & 0 \\ 0 & g\end{pmatrix}$ this  finally leads to
\begin{equation}\label{eq:G-sc}
    G_t=\begin{pmatrix} \frac{1}{g}+2t\gamma^2p_0^2+gt^2+\frac{2}{3}\sigma^2 t^3 &gt+\sigma^2 t^2\\ gt +\sigma^2 t^2 &g+2\sigma^2 t\end{pmatrix}\,\, .
\end{equation}
We can compare this semiclassical approximation with the exact variances $\Gamma$ we have computed in \eqref{eq:full-variance-q},\eqref{eq:full-variance-pq} and \eqref{eq:full-variance-p}, and see that
\begin{equation}\label{eq:Gamma-vs-sc}
    \Gamma=\frac{\hbar}{2}G_t+\frac{\hbar^2}{2}\begin{pmatrix} \gamma^2 (gt+\sigma^2 t^2) & 0\\ 0 & 0\end{pmatrix}\,\, .
\end{equation}
So the semiclassical approximation is correct for the order $\hbar$ terms, but is not catching the next order $\hbar^2$ term, which is proportional to $\gamma^2$. 

In Figure \ref{fig:rho_gaussian_evolution_p0_0dot5} we compare the semiclassical approximation to the exact result for a state which has a non-zero initial momentum. We see that the semiclassical approximation accurately reproduces the motion of the centre, but shape of the state becomes less Gaussian as time evolves.

We will now consider the semiclassical approximation for the case $Y\neq 0$, which we will write as 
\begin{equation}\label{eq:rho-X-Y}
\rho(x,t)=\frac{c\ue^{\frac{\ui}{\hbar}d}}{(\pi\hbar)^n\sqrt{\det  G }}\ue^{-\frac{1}{\hbar}(x-X)\cdot G^{-1}(x-X)+\frac{\ui}{\hbar}x\cdot Y}
\end{equation}
where $c,d\in \C$, $Z=(X,Y)\in\R^{4n}$ are time dependent parameters and $G$ is complex symmetric with $\Re G>0$. The results in \cite{GraeLongPlasSchu18}, equations (68), (69) and (70), can then be rewritten as 
\begin{align}
\dot Z&=\Omega_2 \nabla \Re K(Z)+\mathcal{G}^{-1} \nabla\Im K(Z) \label{eq:dot-Z}\\
\dot G&=2\ui K_{yy}+K_{yx} G+GK_{xy}-\frac{\ui}{2}G K_{xx}G\label{eq:dot-G}\\
\dot d&=-K(Z)+\dot Y X\\
\dot c&=\frac{1}{4}\tr[2\ui K_{yy}G^{-1}+\frac{\ui}{2} K_{xx}G] 
\end{align}
where by equation (76) in \cite{GraeLongPlasSchu18}
\begin{equation}
\mathcal{G}^{-1}=\begin{pmatrix} \Re G+\Im G(\Re G)^{-1}\Im G & -\Im G(\Re G)^{-1}\\ -(\Re G)^{-1}\Im G & (\Re G)^{-1}\end{pmatrix}
\,\, .
\end{equation}
This is a more complex set of evolution equations for the parameters than in the case $Y=0$, due to the fact that now the equation for $Z$ contains $G$, and hence cannot be solved independently from the equation for $G$, which in turn depends on $Z$ because the matrices of second derivatives of $K$ are evaluated at $Z$. This is a characteristic property of non-Hermitian evolution, as discussed in \cite{GraeSchu11,GraeSchu12} and \cite{BurLupUribe13}.

\begin{figure}[ht]
\centering
\subfigure[$\gamma=1,t=0$]{\includegraphics[width=0.3\linewidth]{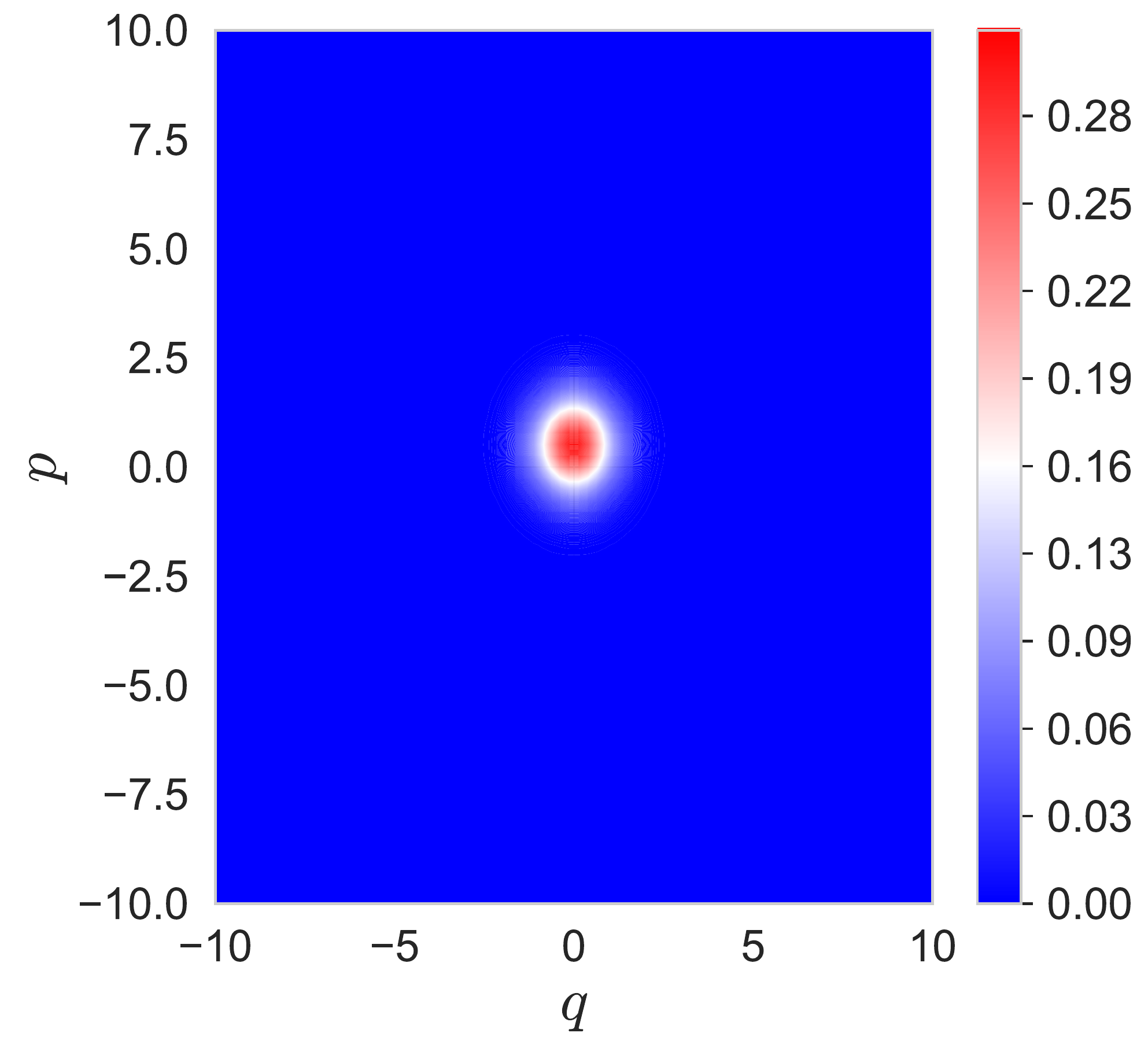}}
\subfigure[$\gamma=1,t=0.5$]{\includegraphics[width=0.3\linewidth]{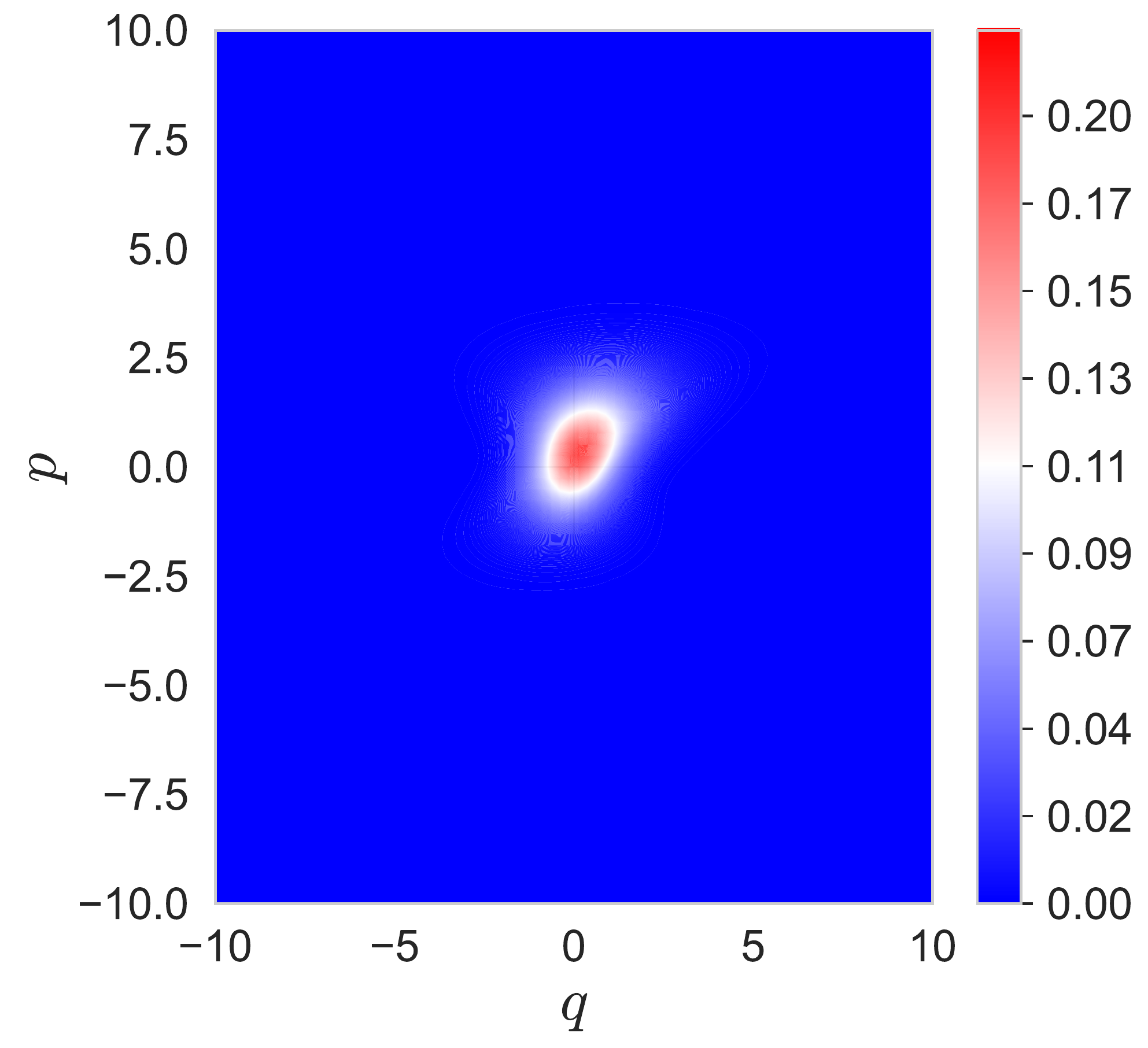}}
\subfigure[$\gamma=1,t=1$]{\includegraphics[width=0.3\linewidth]{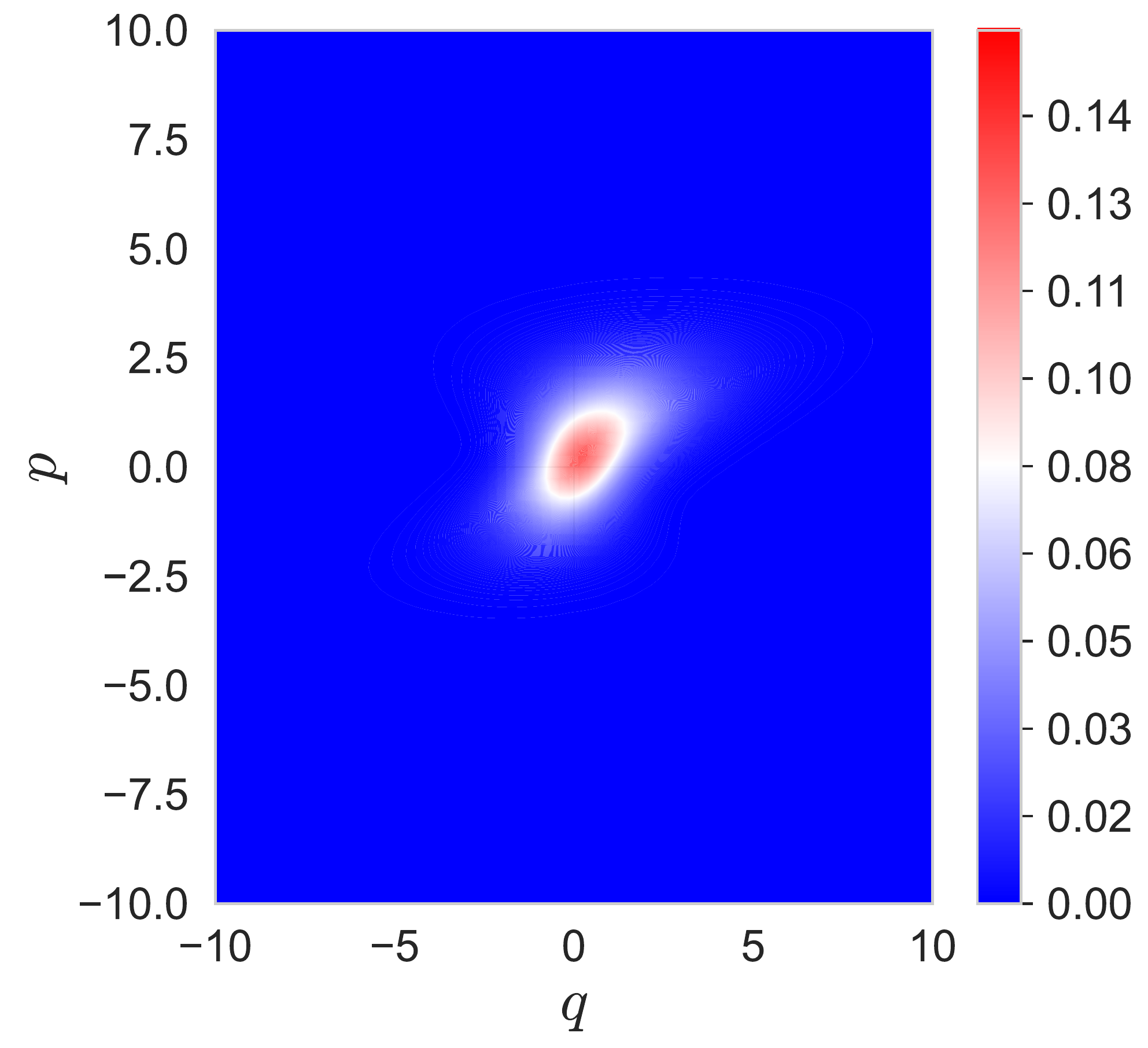}}
\subfigure[$\gamma=1,t=0$, Gaussian]{\includegraphics[width=0.3\linewidth]{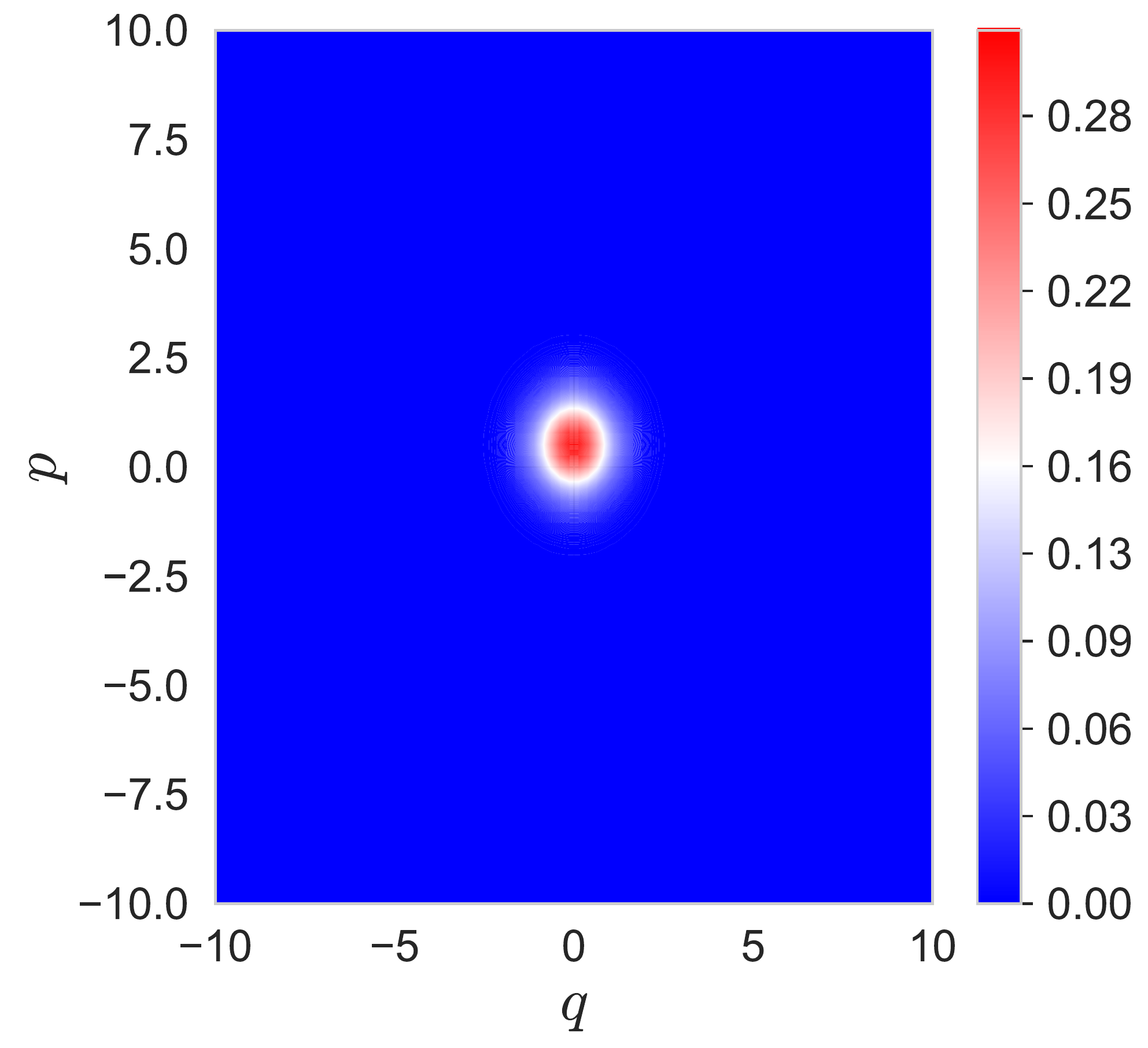}}
\subfigure[$\gamma=1,t=0.5$, Gaussian]{\includegraphics[width=0.3\linewidth]{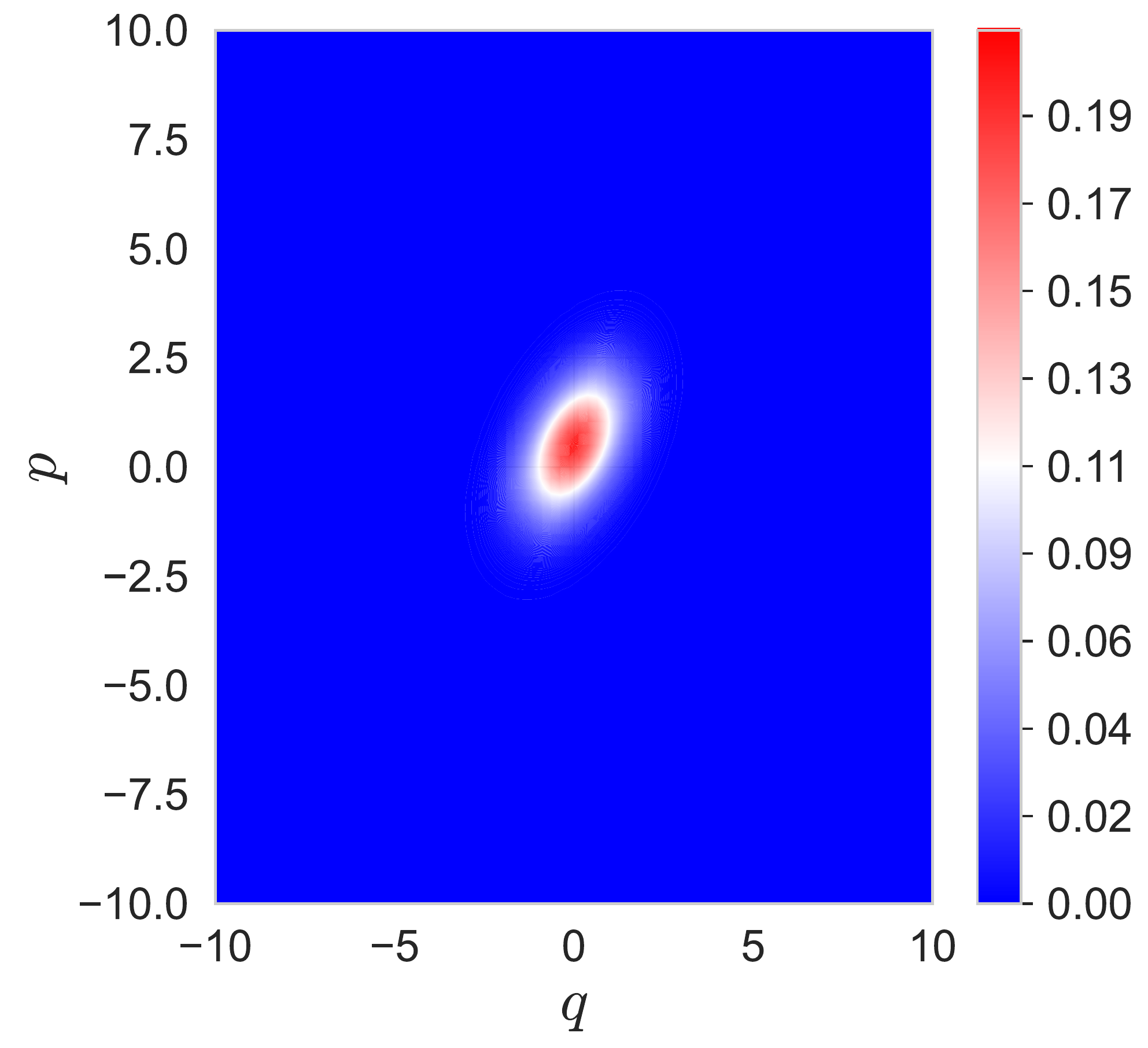}}
\subfigure[$\gamma=1,t=1$, Gaussian]{\includegraphics[width=0.3\linewidth]{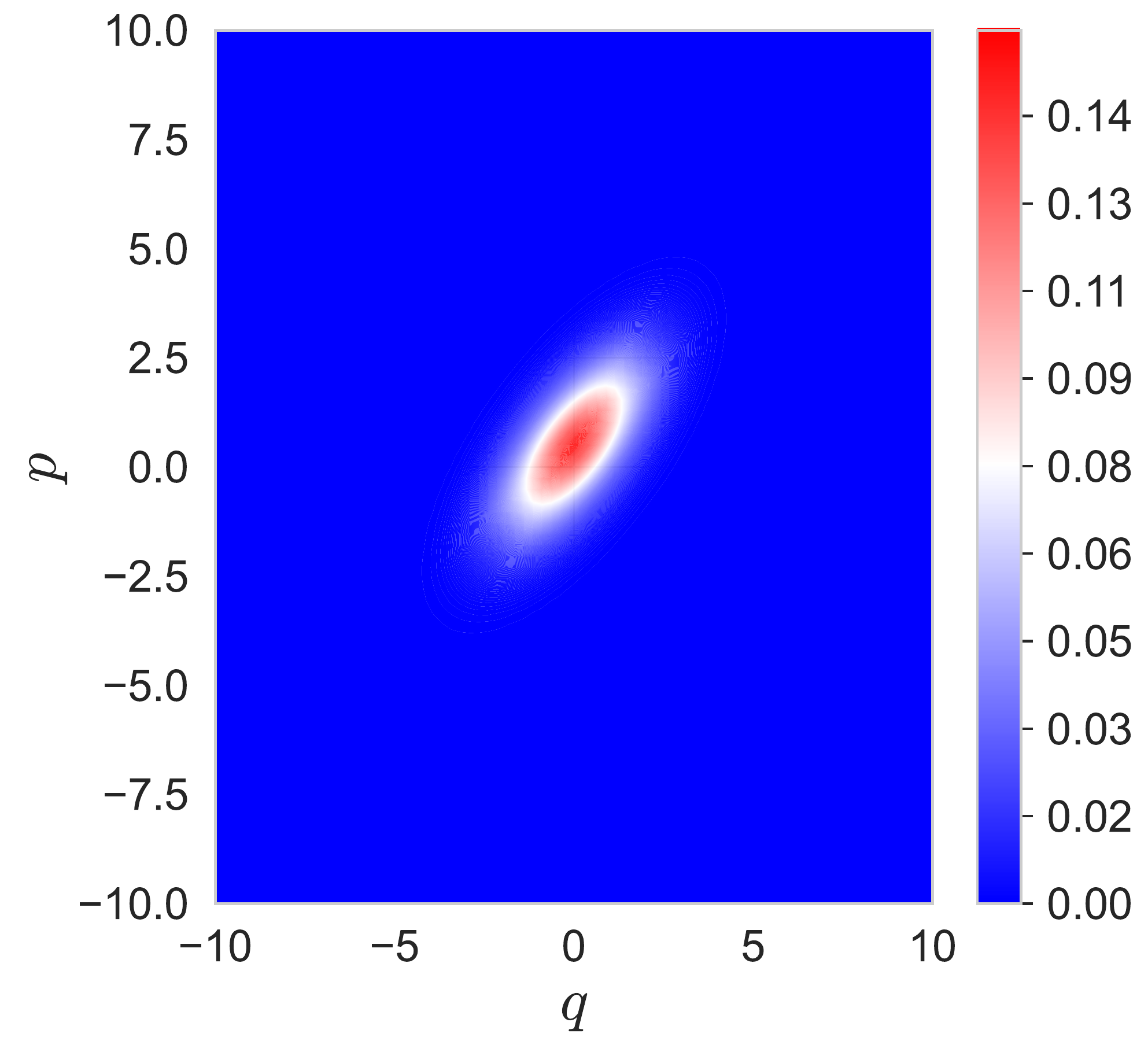}}
\caption{Comparison of evolution of $\rho$ and its Gaussian approximation based on \eqref{eq:G-sc}  for an initial state \eqref{eq:Gaussian-0} with $q_0=0$, $p_0=0.5$, $g=1$ and $\hbar=1$ for $\sigma=1$ and $\gamma=1$. }\label{fig:rho_gaussian_evolution_p0_0dot5}
\end{figure}

But for $Y\neq 0$ we expect decoherence to set in and suppress $\rho(x,t)$ rapidly, as we have seen in \eqref{small_time_trace_rho}, and we now want to study how we can reproduce \eqref{small_time_trace_rho} from our semiclassical equations. 
We have for \eqref{eq:rho-X-Y} that 
\begin{equation}
\tr[\hat\rho^*\hat\rho]=\frac{\abs{c}^2}{(\det G)^{1/2}} \, \ue^{-\frac{2}{\hbar} \Im d}\,\, ,
\end{equation}
so we get an exponential decay if $\Im d(t)>0$. As $\Im d(t)=\int_0^t \Im K(X(s),Y(s))\, \ud s$ and $\Im K(X,Y)\geq 0$ we have that $d(t)$ is non-decreasing for increasing $t$ and in particular we have that if $\Im K(X_0,Y_0)\neq 0$ then 
\begin{equation}
\Im d(t)=t \Im K(X_0,Y_0)+O(t^2)\,\, .
\end{equation}
If $\Im K(X_0,Y_0)=0$ then $\nabla \Im K(X_0,Y_0)=0$, as $\Im K(Z)\geq 0$, and in this case \eqref{eq:dot-Z} gives $\dot X(t=0)=FX_0$ and $\dot Y(t=0)=-F^T Y_0$, which leads to 
\begin{equation}
\Im K(X,Y)=\frac{1}{2}\big[t^2(Fl\cdot Y_0)^2+4t^2(Y_0F^2X_0)^2\big]+O(t^3)\,\, .
\end{equation}
In our case $F^2=0$ and $Fl\cdot Y_0=\gamma e_1\cdot Y_0$, where $e_1=(1,0)$ and $e_2=(0,1)$, so we have 
\begin{equation}
\Im d(t)=\begin{cases} t\Im K(X_0,Y_0) +O(t^2) & \Im K(X_0,Y_0)\neq 0\,\, ,\\ \frac{1}{6}t^3 \gamma^2(e_1\cdot Y_0)^2+O(t^4) & \Im K(X_0,Y_0)=0\,\, .\end{cases}
\end{equation}
From $\Im K(X,Y)=\frac{1}{2}[\gamma^2(e_2\cdot Y)^2+\alpha^2p^2(e_1\cdot Y)^2]$ we see that this exactly reproduces the results in \eqref{eq:D-short-time} and \eqref{small_time_trace_rho} we obtained from the exact evolution. So the semiclassical approach predicts the onset of decoherence correctly.

\subsection{Perturbation theory for small dephasing}\label{sec:dephasing}

If the dephasing  parameter $\gamma$ is small, we expect that the channel is close to Gaussian. One possibility to explore the regime of small $\gamma$ is to use the explicit representations we obtained in Section \ref{sec:Gaussian-prop}. But as we saw, the limit for small $\gamma$ is quite complicated. Therefore, in this section we will use time dependent perturbation theory to derive an expression for small $\gamma$. As a starting point we will rewrite the Lindblad equation \eqref{eq:Lindblad-V} as 
\begin{equation}
\pa_t\rho=\cL_1\rho+\gamma^2\cL_2\rho\,\, ,\quad\text{where}\quad \cL_1=-V_0+\frac{\hbar\sigma^2}{2}V_1^2\,\, ,\quad \cL_2=\frac{\hbar}{2}V_0^2
\end{equation}
and the vector fields are defined to be
\begin{equation}\label{eq:V-com}
V_0=p\pa_q\,\, ,\quad V_1=\pa_p\,\, ,\quad V_2=\pa_q\,\, ,\quad\text{with}\quad [V_0,V_1]=-V_2\,\, ,\quad [V_0,V_2]=[V_1,V_2]=0\,\, .
\end{equation}
Now standard time dependent perturbation theory (see for instance \cite{Kato80}) gives 
\begin{equation}\label{eq:exp-gamma}
\rho(t)=\ue^{t\cL_1}\rho_0+\gamma^2\cR_1(t)\ue^{t\cL_1}\rho_0+O(\gamma^4) \,\, \quad\text{with}\quad \cR_1=\int_0^t\ue^{s\cL_1}\cL_2\ue^{-s\cL_1}\,\, \ud s
\end{equation}
and we can expand the operator $\cR_1(t)$ using $\ue^{s\cL_1}\cL_2\ue^{-s\cL_1}=\sum_{k=0}^{\infty}\frac{s^k}{k!}\ad_{\cL_1}^k \cL_2$, where 
\begin{equation}
\ad_{\cL_1}\cL_2:=[\cL_1,\cL_2]\,\, ,
\end{equation}
and $\ad_{\cL_1}^0\cL_2=\cL_2$, which gives
\begin{equation}\label{eq:exp-R}
\cR_1(t)=\sum_{k=0}^{\infty}\frac{t^{k+1}}{(k+1)!} \ad_{\cL_1}^k \cL_2\,\, .
\end{equation}
In general this representation as an infinite sum is not very useful, but in our situation the sum is actually finite and terminates at $k=4$. We find using \eqref{eq:V-com} that
\begin{equation}
\ad_{\cL_1}\cL_2=\frac{\hbar^2\sigma^2}{2}(V_0V_1+V_1V_0)V_2\,\, ,\quad  \ad_{\cL_1}^2\cL_2=\hbar^2\sigma^2(V_0+\hbar\sigma^2V_1^2)V_2^2
\end{equation}
and
\begin{equation}
\ad_{\cL_1}^3\cL_2=3\hbar^3\sigma^4V_1V_2^3\,\, \quad  \ad_{\cL_1}^4\cL_2=-3\hbar^3\sigma^4V_2^4\,\, , \quad \ad_{\cL_1}^5\cL_2=0\,\, ,
\end{equation}
and then all higher terms vanish. So $\cR_1(t)$ is a differential operator of order 4 in $x=(q,p)$ which acts on the Wigner function $\big(\ue^{t\cL_1}\rho_0\big)(x)$ and hence we can compute $\cR_1(t) \big(\ue^{t\cL_1}\rho_0\big)(x)$ if, for instance, $\rho_0(x)$ is Gaussian. We will not compute the expression explicitly here, but we will use this result in the next section where we consider the entropy of the time evolved states. 



\section{Entropy}

In this section we will look at the entropy of our states and how it evolves in time. Recall that the von Neumann entropy of a state $\hat\rho$ is defined as 
\begin{equation}
S(\hat\rho):=-\tr\big[ \hat\rho\ln\hat \rho\big]\,\, ,
\end{equation}
and is an extension of the Shannon entropy from information theory to quantum mechanics. The entropy is $0$ if and only if the state $\hat \rho$ is pure, so the entropy can as well be viewed as quantifying how much the state $\hat \rho$ is away from being pure. 

The entropy of a Gaussian state can be expressed explicitly in terms of the symplectic eigenvalues of the covariance matrix $G$. If $G$ is a real symmetric positive $2n\times 2n$ matrix, then the eigenvalues of $\ui\Omega^TG$ come in pairs $\pm z_j$, $j=1,2,\cdots, n$, $z_j>0$, which are called the symplectic eigenvalues of $G$. By the uncertainty relation we have $z_j\geq 1$.  The entropy of the state is then given by \cite{HolWer01}
\begin{equation}\label{eq:S-quadratic}
S(\hat \rho)=\sum_{j=1}^nf(z_j) \,\, ,\quad \text{where}\quad  f(z)=\frac{1}{2}(z+1)\ln\bigg(\frac{1}{2}(z+1)\bigg)-\frac{1}{2}(z-1)\ln\bigg(\frac{1}{2}(z-1)\bigg)   .
\end{equation}
Notice that our convention for the normalisation of $G$ is different from \cite{HolWer01}, which leads to the extra factors of $2$ in $f$. In the special case that $n=1$ we have $z_1=\sqrt{\det G}$ and so $S(\hat \rho)=f(\sqrt{\det G})$. We can therefore use the Gaussian approximation for $\hat \rho$ from Section \ref{sec:semi-approx}, which gives 
$z_1=\sqrt{\det G}=\sqrt{\det \Lambda_t}$ with $\Lambda_t$ given by 
\eqref{eq:Lambda_t}.

We want to compare this with the approximation for small dephasing we developed in Section \ref{sec:dephasing}, to that end we will make use of the following result.

\begin{lem}\label{lem:S-FH} Assume $\hat \rho(\veps)$, $\veps \geq 0$,  is differentiable family of density operators with $\tr \hat \rho(\veps)=1$, then 
\begin{equation}
S(\hat \rho(\veps))=S(\hat\rho(0))-\veps \tr\big[\hat \rho'(0)\log\hat \rho(0)\big]+O(\veps^2)\,\, ,
\end{equation}
where $\hat \rho'(\veps)=\frac{\ud \rho(\veps)}{\ud \veps}$.
\end{lem}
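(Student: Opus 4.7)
The plan is to compute the first-order Taylor expansion of $\varepsilon \mapsto S(\hat\rho(\varepsilon)) = -\tr[\hat\rho(\varepsilon)\log\hat\rho(\varepsilon)]$ directly, using the standard identity that for a differentiable family of self-adjoint operators $\hat A(\varepsilon)$ and a smooth scalar function $f$,
\begin{equation*}
\frac{d}{d\varepsilon}\tr\bigl[f(\hat A(\varepsilon))\bigr] = \tr\bigl[f'(\hat A(\varepsilon))\,\hat A'(\varepsilon)\bigr].
\end{equation*}
This identity is most transparent through the spectral decomposition: writing $\hat A(\varepsilon)=\sum_n a_n(\varepsilon)\ket{n(\varepsilon)}\bra{n(\varepsilon)}$, one has $\tr f(\hat A)=\sum_n f(a_n)$, and first-order (Hellmann--Feynman type) perturbation theory gives $\dot a_n = \bra{n}\hat A'\ket{n}$, so term-by-term differentiation yields $\sum_n f'(a_n)\bra{n}\hat A'\ket{n}=\tr[f'(\hat A)\hat A']$. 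The off-diagonal contributions to $\dot{\ket{n}}$ cancel when one contracts with $f'(\hat A)=\sum_n f'(a_n)\ket{n}\bra{n}$, which is the mechanism that makes the identity work even though the eigenprojectors themselves evolve.

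Applying this with $f(x)=x\log x$, so that $f'(x)=\log x+1$, I would obtain
\begin{equation*}
\frac{d}{d\varepsilon}\tr\bigl[\hat\rho(\varepsilon)\log\hat\rho(\varepsilon)\bigr]\bigg|_{\varepsilon=0}
=\tr\bigl[(\log\hat\rho(0)+1)\hat\rho'(0)\bigr]
=\tr\bigl[\hat\rho'(0)\log\hat\rho(0)\bigr]+\tr\hat\rho'(0).
\end{equation*}
The crucial simplification comes from the trace-preservation assumption: since $\tr\hat\rho(\varepsilon)\equiv 1$, differentiating at $\varepsilon=0$ gives $\tr\hat\rho'(0)=0$. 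Therefore
\begin{equation*}
\frac{d}{d\varepsilon}S(\hat\rho(\varepsilon))\bigg|_{\varepsilon=0}=-\tr\bigl[\hat\rho'(0)\log\hat\rho(0)\bigr],
\end{equation*}
and Taylor's theorem then yields the stated expansion with an $O(\varepsilon^2)$ remainder.

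The main obstacle is a technical one: $f(x)=x\log x$ is smooth on $(0,\infty)$ but not at $x=0$, and in an infinite-dimensional Hilbert space $\hat\rho$ typically has eigenvalues accumulating at zero, so $\log\hat\rho$ is unbounded. To justify the differentiation of $\tr f(\hat A(\varepsilon))$ rigorously one needs enough regularity of $\hat\rho(0)$ near $0$ in its spectrum and suitable trace-class properties of $\hat\rho'(0)\log\hat\rho(0)$; alternatively one can employ the integral representation
\begin{equation*}
\log\hat\rho=\int_0^\infty\!\left(\frac{1}{1+s}-\frac{1}{\hat\rho+s}\right)\!ds
\end{equation*}
and differentiate under the integral. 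For the purposes of the lemma, it suffices to assume (as is implicit) that the right-hand side $\tr[\hat\rho'(0)\log\hat\rho(0)]$ is finite and that $\hat\rho(\varepsilon)$ is sufficiently regular for the expansion to hold; the algebraic identities above are then the content of the proof.
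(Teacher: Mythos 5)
Your proposal is correct and follows essentially the same route as the paper: both differentiate the entropy through the spectral decomposition using the Hellmann--Feynman relation $\lambda_n'=\langle n|\hat\rho'|n\rangle$, obtain the two terms $\tr[\hat\rho'(0)\log\hat\rho(0)]$ and $\tr\hat\rho'(0)$, and kill the second via trace preservation before invoking Taylor's theorem. Your additional remarks on the unboundedness of $\log\hat\rho$ near the bottom of the spectrum address a regularity issue the paper passes over silently, but they do not change the argument.
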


\begin{proof}
Let $\lambda_n(\veps)$ and $|n,\veps\ra$ be the eigenvalues and eigenfunctions of $\hat \rho(\veps)$, then by the Feynman Hellman theorem we have 
$\lambda_n'(\veps)=\la n,\veps|\hat\rho'(\veps)|n,\veps\ra$, and so we get 
\begin{align}\label{deriv_S_rho}
\frac{\ud S(\hat\rho(\veps))}{\ud \veps}
&=-\sum_n\frac{\ud \big(\lambda_n(\veps)\ln\lambda_n(\veps)\big)}{\ud \veps}=-\sum_n \la n,\veps|\hat\rho'(\veps)|n,\veps\ra\ln \lambda_n(\veps)-\sum_n \lambda_n'(\veps) \nonumber \\
&=-\tr \big[\hat \rho'(0)\log\hat \rho(0)\big]
\end{align}
where we used that $\sum_n \lambda_n'(\veps)=0$ since $\tr \hat\rho(\veps)=1$. The result is then obtained by substituting the expression \eqref{deriv_S_rho} into the Taylor expansion of $S(\hat{\rho}(\veps))$ for small $\veps$.
\end{proof}

We will use this result with $\veps=\gamma^2$ and combine it with \eqref{eq:exp-gamma} to give an expression for $\hat \rho'(0)$. We will also use some results from semiclassical analysis which we recalled in Appendix \ref{app:semi}. In order to explain the main ideas let us first look at the first term, $S(\hat \rho(0))$, where $\hat \rho(0)$ is a Gaussian state \eqref{eq:Gaussian-state-diag-G} with covariance matrix $G$. The Weyl symbol $\rho$ of $\hat \rho(0)$ is given by $2(\det G)^{-1}\ue^{-\frac{1}{\hbar} x\cdot G^{-1}x}$ and so we can use \eqref{eq:prod-tr} to obtain 
\begin{equation}
S(\hat \rho(0))=-\tr\big[\hat \rho(0)\ln \hat\rho(0)\big]=-\frac{1}{2\pi\hbar}\int 2(\det G)^{-1}\ue^{-\frac{1}{\hbar} x\cdot G^{-1}x} B(x)\,\, \ud x
\end{equation}
where $B(x)$ denotes the Weyl symbol of $\ln \hat \rho(0)$, i.e., $\hat B= \ln \hat \rho(0)$. We show in Appendix \ref{app:semi} that $B(x)=-\frac{1}{2}x\cdot Qx-\ln Z$ with 
\begin{equation}\label{eq:Q-G}
Q=2\frac{z}{\hbar}\coth^{-1}(z)\, G^{-1}\,\, ,\quad \text{where}\quad z=\sqrt{\det G}\,\, ,
\end{equation}
and 
\begin{equation}
Z=\frac{1}{2}\sqrt{z^2-1}\,\, .
\end{equation}
Using \eqref{eq:int-tr-GQ} and the normalisation of $\hat \rho(0)$ we then find
\begin{equation}
S(\hat \rho(0))=\frac{\hbar}{2}\tr[GQ]+\ln Z(1)=z\coth^{-1}(z)+\ln\bigg(\frac{1}{2}\sqrt{z^2-1}\bigg)
\end{equation}
and after a bit of algebra this expression can be transformed into \eqref{eq:S-quadratic} for $n=1$. 

To compute the second term in the expansion we follow the same strategy and use that the Weyl symbol of $\hat \rho'(0)$ is given by $\cR_1\rho$ where $\cR_1$ is a differential operator in $x$ given by \eqref{eq:exp-gamma}, then we have 
\begin{equation}
\tr [\hat\rho'(0)\ln \hat \rho(0)]=\frac{1}{2\pi\hbar}\int \cR_1\rho(x)B(x)\, \ud x=\frac{1}{2\pi\hbar}\int \rho(x)\cR_1^{T}B(x)\, \ud x\,\, ,
\end{equation}
where $\cR_1^{T}$ is the adjoint of $\cR_1$ obtained by partial integration. The fact that $B(x)$ is a polynomial of order $2$ implies that terms in $\cR^T$ containing derivatives of order $3$ and higher will not contribute to $\cR_1^{T}B(x)$ and in fact only the first two terms in the expansion \eqref{eq:exp-R} contribute, which gives 
\begin{equation}
\cR_1^{T}B(x)=t\cL_2^{T}B(x)+\frac{1}{2}t^2[\cL_1,\cL_2]^{T}B(x)=t\frac{\hbar}{2}V_0^2B(x)-\frac{1}{2}t^2\frac{\hbar^2\sigma^2}{2}V_2^2B(x)\,\, .
\end{equation}
With $V_0^2B(x)=-p^2Q_{22}$ and a $V_2^2B(x)=-Q_{22}$, where $Q_{22}$ is the $q,q$ matrix element of $Q$, we find 
\begin{equation}
\tr[\hat\rho'(0)\ln \hat \rho(0)]=-\frac{1}{2\pi\hbar}\int \rho(x)\frac{1}{2}x\cdot E_p x\, \ud x\, t\hbar Q_{22}+t^2\frac{\hbar^2\sigma^2}{4}Q_{22}
\end{equation}
where $x\cdot E_p x=p^2$, and we can apply \eqref{eq:int-tr-GQ} with $Q=E_p$ to obtain 
\begin{equation}
\tr[ \hat\rho'(0)\ln \hat \rho(0)]=-t\frac{\hbar^2}{2} Q_{22}G_{11}+t^2\frac{\hbar^2\sigma^2}{4}Q_{22}
\end{equation}
and from \eqref{eq:Q-G} and Cramer's rule we derive
\begin{equation}
Q_{22}=2\frac{z}{\hbar}\coth^{-1}(z)\, \frac{G_{11}}{z^2}
\end{equation}
and hence
\begin{equation}
\tr[ \hat\rho'(0)\ln \hat \rho(0)]=-\bigg(tG_{11}^2-\frac{1}{2}t^2 \sigma^2 G_{11}\bigg)\frac{\hbar}{z}\coth^{-1}(z)\,\, .
\end{equation}
From \eqref{eq:G-sc} we see that $G_{11}=g+2\sigma^2t$, and combining the results we find from Lemma \ref{lem:S-FH} 
for the entropy
\begin{equation}\label{eq:entropy-t-approx}
    S(\hat\rho(t))=f(z)+\frac{\gamma^2\hbar t}{2} (g+2\sigma^2 t)\bigg(g+\frac{3}{2}\sigma^2 t\bigg) \frac{1}{z}\ln\bigg(\frac{z+1}{z-1}\bigg)+O(\gamma^4)\,\, ,
\end{equation}
where $z=\sqrt{\det G_t}$, $f(z)$ is given by \eqref{eq:S-quadratic} and we have used $\mathrm{coth}^{-1}(z) = \frac{1}{2}\mathrm{ln}\left( \frac{z+1}{z-1} \right)$. 

\begin{figure}[ht]
\includegraphics[width=0.45\linewidth]{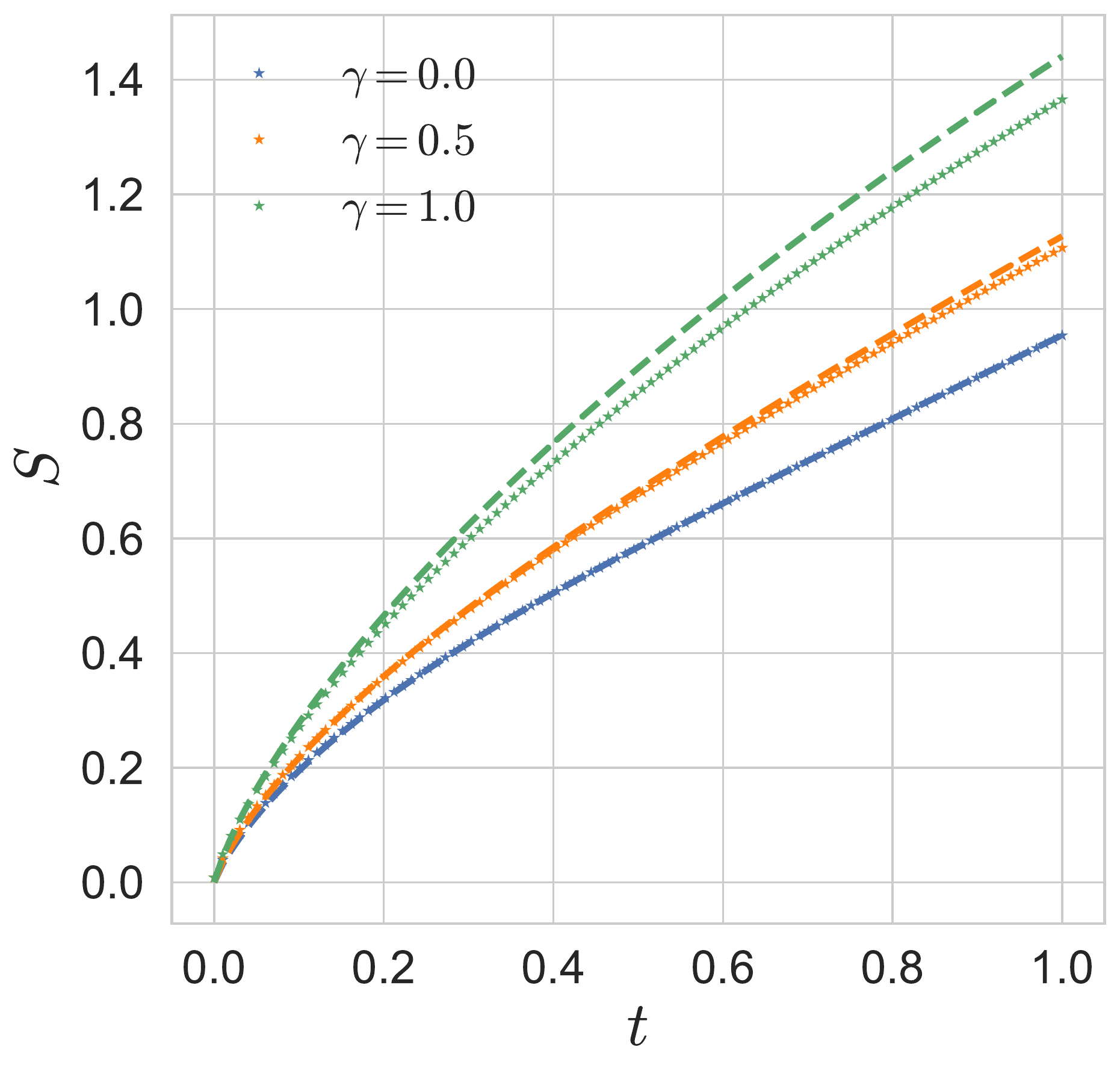}
\includegraphics[width=0.45\linewidth]{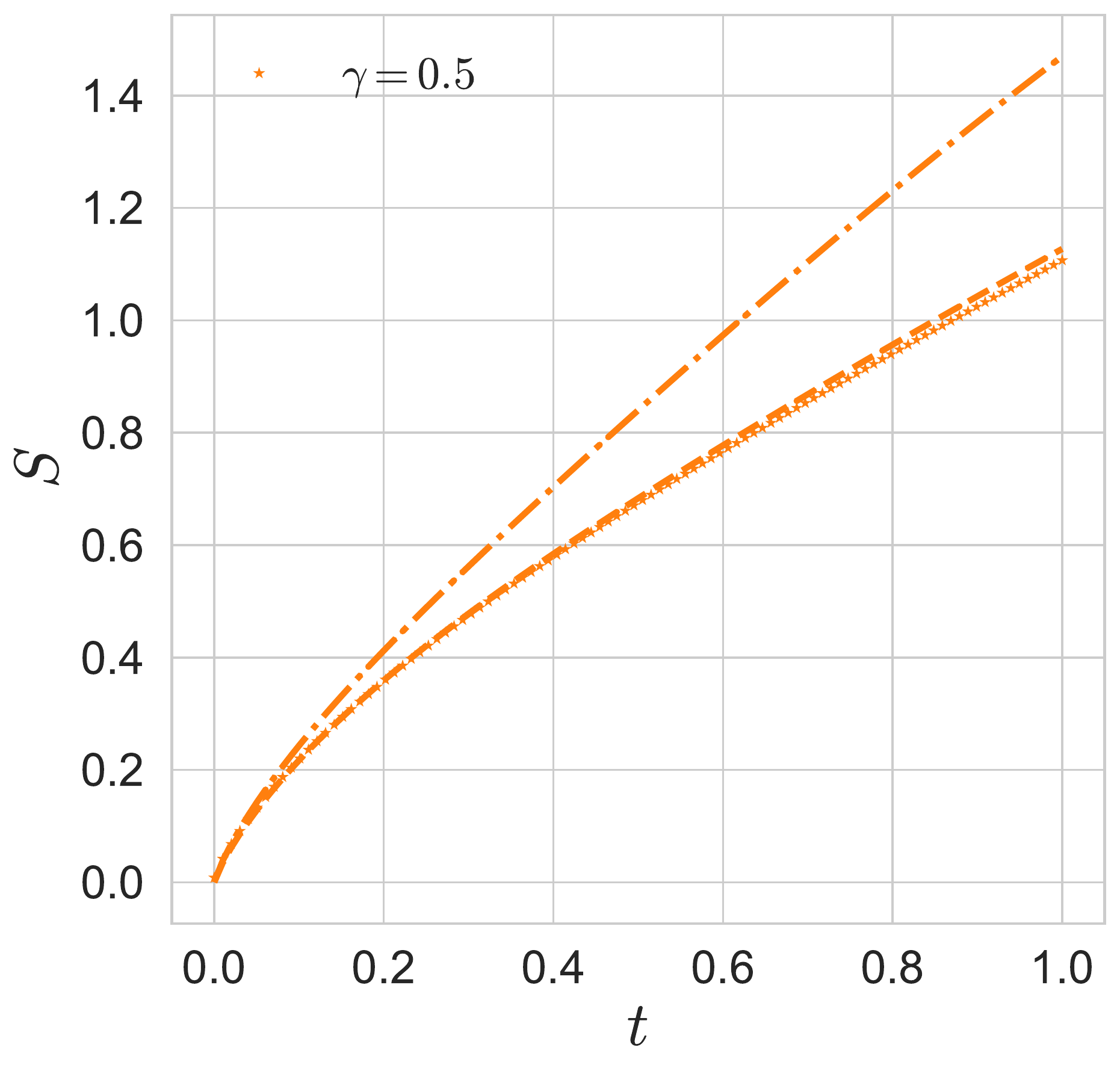}
\caption{Evolution of the entropy $S(\hat\rho(t))$ for an initial 
pure Gaussian state \eqref{eq:Gaussian-0} with $p_0=q_0=0$, $g=1$, and for $\sigma=1$, $\hbar=1$ and different values of $\gamma$. Left panel: The evolution of the entropy for $\gamma=0$, $\gamma=0.5$ and $\gamma=1$ compared with the approximation \eqref{eq:S-cov} (dashed lines). Right panel: The evolution of the entropy for $\gamma=0.5$ compared with the approximations \eqref{eq:S-cov} (dashed line) and \eqref{eq:entropy-t-approx} (dash-dotted line)}\label{fig:entropy}
\end{figure}

In Section \ref{sec:Gaussian-prop}  we were able to compute the covariance $\Gamma$ of $\hat \rho(t)$ explicitly, and we saw that it contains a higher order term which is not covered by the leading order semiclassical approximation \eqref{eq:G-sc}. So it is interesting to compare the entropy of the state  with the entropy of a Gaussian state which has the same covariance, 
\begin{equation}\label{eq:S-cov}
 S_{cov}:=f\big(2\sqrt{\det \Gamma}/\hbar)\big)\,\, ,
\end{equation}
where $f$ is given by \eqref{eq:S-quadratic} and $\Gamma$ by \eqref{eq:Gamma-vs-sc}. 

In Figure \ref{fig:entropy} we computed the evolution of the entropy for an initial Gaussian state \eqref{eq:Gaussian-0} centered at $p_0=q_0=0$ with $\hbar=1$ and $g=1$ for $\sigma=1$ and three different values of $\gamma$: $\gamma=0$, $\gamma=0.5$ and $\gamma=1$. For $\gamma=0$ the evolution is Gaussian and we see the entropy matches \eqref{eq:S-quadratic}. For $\gamma=0.5$ and $\gamma=1$ the entropy increases as one would expect, and we can compare it to the various approximations we have discussed. For an initial state centred at $p_0=0$ the semiclassical variance \eqref{eq:G-sc} does not depend on $\gamma$ and hence using the semiclassical approximation for the state gives an identical entropy for all values of $\gamma$, but we see in Figure \ref{fig:entropy} that this is not accurate. This is not surprising since we have chosen $\hbar=1$ and therefore cannot expect the semiclassical approximation to be very accurate for this parameter value. 
We found that a better approximation is given by a Gaussian state with the same variance as the exact state, \eqref{eq:Gamma-vs-sc}, we compared the entropies \eqref{eq:S-cov} for different $\gamma$ values to the exact entropies on the left panel in Figure \ref{fig:entropy} and found very good agreement. Since the $\hbar$ corrections are explicit in \eqref{eq:Gamma-vs-sc} we see as well how the entropies converge to the semiclassical value for $\hbar\to 0$. 

On the left panel of Figure \ref{fig:entropy} we finally compare the perturbative result \eqref{eq:entropy-t-approx} for the entropy to the exact values, and we see that the agreement is good for small $t$ but then starts deviating and is not as good as the approximation based on the modified variance \eqref{eq:S-cov}. We believe that this is due to the fact that functional form of \eqref{eq:S-cov} reflects the properties of the entropy better and is therefore a better approximation over a wider time range. 



\section{Summary and Outlook}

We considered a family of non-Gaussian quantum channels generated by the Lindblad equation with a free internal Hamiltonian $\hat{H}=\frac{1}{2}\hat p^2$ and Lindblad operators $\hat{L}_1=\sigma \hat q$ and $\hat{L}_2=\gamma \hat{H}$. This is a natural generalisation of the well known Gaussian case $\gamma=0$ which has been extensively used to study decoherence induced by random scattering from an environment, see, e.g., \cite{JoosEtAl03}. We view this as a model case for Quantum Channels beyond the Gaussian case. 

We obtained an explicit expression for the characteristic function of a state obtained by acting with this non-Gaussian Channel on a Gaussian state, and which allowed us in turn to give explicit expressions for expectation values and variances of position and momentum. More generally we derived an explicit expression for the propagator, i.e., the integral kernel of the Quantum Channel. 

We compared these results with the Gaussian case and with several approximations, in particular the semiclassical approximation and a perturbation theory in $\gamma$, and studied how accurately they can describe the evolution of the entropy. Of particular interest is the case of an initial Gaussian state with momentum $0$ since in this case the leading order semiclassical approximation does not detect the dephasing caused by $L_2=\gamma H$, which our analysis shows is an order $\hbar$ correction to the variance. 

The chosen model allows us to derive explicit solutions which we were able to study in quite some detail, but for more general systems we need more qualitative methods. We noticed that our model is related to the Grushin operator, which is a hypoelliptic operator related to sub-Riemannian geometry and analysis on nilpotent Lie groups. We believe that this connection should be explored further and expect that ideas and techniques currently developed in sub-Riemannian geometry, \cite{Agr20}, will prove very useful in the study of the Lindblad equation and open quantum systems. 

\vspace{1cm}
%
\ack
D.S. acknowledges support from the EPSRC Doctoral Prize Fellowship scheme. R.S. acknowledges the financial support provided by the
EPSRC Grant No. EP/P021123/1.

%
%

\section*{Appendices}
\appendix

\section{Some tools from Semiclassical Analysis}\label{app:semi}

In this appendix we recall and summarise a few results from Semiclassical Analysis which we need in the main text, see \cite{Zwo12} for more background.  If $A(p,q)$, $p,q\in\R^n\oplus\R^n$, is a function on phase space, then its Weyl quantisation is an operator $\hat A$ defined as 
\begin{equation}
\big(\hat A\psi\big)(q)=\int K_A(q,q')\psi(q')\, \ud q'\,\, ,\quad\text{with}\quad K_A(q,q')=\frac{1}{(2\pi\hbar)^n}\int\ue^{\frac{\ui}{\hbar}(q-q')p}A\bigg(p,\frac{q+q'}{2}\bigg)\, \ud p\,\, ,
\end{equation}
and the function $A(p,q)$ is called the Weyl symbol of the operator $\hat A$. It satisfies in particular 
\begin{equation}\label{eq:trace-symbol}
\tr \hat A=\frac{1}{(2\pi\hbar)^n}\iint A(p,q)\, \ud p\ud q\,\, ,
\end{equation}
if $A$ is of trace class. Any linear operator has a Weyl symbol, which in general is a distribution. If $\hat\rho$ is a density operator, then the Weyl symbol $\rho(x)$, $x=(p,q)$, and the Wigner function $W_{\hat\rho}(x)$ of $\hat\rho$ are related by 
\begin{equation}
W_{\hat \rho}(x)=\frac{1}{(2\pi\hbar )^n}\rho(x) 
\end{equation}
so that $\int W_{\hat\rho}(x)\ud x=\tr \hat \rho=1$. 

The Weyl symbol of the product of two operators $\hat A$ and $\hat B$ is given by 
\begin{equation}\label{eq:product-AB}
A\#B(x)=\sum_{n=0}^{\infty}\frac{(\ui\hbar)^n}{2^nn!}A(x)(\overleftarrow{\nabla} \cdot\Omega \overrightarrow{\nabla})^nB(x)\,\, ,
\end{equation}
where the sum has to be in general understood as an asymptotic expansion in powers of $\hbar$, but if either $A$ or $B$ is a polynomial, then the sum is finite and the result is exact. 
Here $\Omega$ is the antisymmetric matrix \eqref{eq:Omega} defining the symplectic structure on phase space. The product \eqref{eq:product-AB} is invariant under certain phase space transformations, let $\varphi(x):=Sx+v$, where $v\in \R^n\oplus\R^n$ and $S:\R^n\oplus\R^n\to\R^n\oplus\R^n$ is symplectic, i.e, $S^T\Omega S=\Omega$, then 
\begin{equation}
A_{\varphi}\#B_{\varphi}=\big(A\#B\big)\circ \varphi\,\, ,\quad \text{where}\quad A_{\varphi}:=A\circ\varphi\,\, ,\quad B_{\varphi}=B\circ \varphi\,\, . 
\end{equation}

One nice property of the Weyl calculus is that the trace of the product of two operators has a simple expression in terms of the symbols, 
\begin{equation}\label{eq:prod-tr}
\tr\big(\hat A\hat B\big)=\frac{1}{(2\pi\hbar)^n}\int A(x)B(x)\, \ud x\,\, ,
\end{equation}
the higher order terms in the product formula do not contribute. We use this formula in particular for the case that $A(x)=2^n(\det G)^{-n}\ue^{-\frac{1}{\hbar} x\cdot G^{-1}x}$ and 
$B=\frac{1}{2} x\cdot Qx$, where $G$ and $Q$ are real positive symmetric matrices, and find
\begin{equation}\label{eq:int-tr-GQ}
\tr\big(\hat A\hat B\big)=\frac{1}{(\pi \hbar)^n(\det G)^n}\int \ue^{-\frac{1}{\hbar} x\cdot G^{-1}x}\frac{1}{2} x\cdot Qx\, \ud x=\frac{\hbar}{2} \tr\big(GQ\big)\,\, .
\end{equation}

One application of this formalism is the computation of the Weyl symbol of $\ue^{-\beta \hat H}$, where $H(x)=\frac{1}{2}x\cdot Qx$ is a positive definite quadratic form. 

\begin{lem}Let $\hat A(\beta)=\ue^{-\beta \hat H}$ with $H=\frac{1}{2}{x\cdot Qx}$ where $Q^T=Q$ and $Q>0$, $\beta >0$, and $x=(p,q) \in \R^2$, then 
\begin{equation}
A(\beta,x)=\frac{1}{\cosh(\omega \hbar \beta/2)}\ue^{-\frac{2\tanh(\omega\hbar \beta/2)}{\omega \hbar}H(x)} \quad\text{and}\quad Z(\beta):=\tr \hat A(\beta)=\frac{1}{2\sinh(\hbar \omega\beta/2)}\,\, ,
\end{equation}
where $\omega =\sqrt{\det Q}$. Furthermore, the state $\rho_{\beta}:=\ue^{-\beta \hat H}/Z(\beta)$ is Gaussian and its characteristic function is given by 
\begin{equation}
\chi_{\rho_{\beta}}(\xi)=\ue^{-\frac{1}{4\hbar}\frac{\omega}{\tanh(\omega\hbar/2)}\xi\cdot Q^{-1}\xi}\,\, .
\end{equation}

\end{lem}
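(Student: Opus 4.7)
My plan is to exploit the invariance of the Weyl star product under linear symplectic coordinate changes to reduce the computation of $A(\beta,\cdot)$ to the standard harmonic oscillator, where the answer is a Gaussian (a form of Mehler's formula) obtained from a Gaussian ansatz, and then to derive $Z(\beta)$ and $\chi_{\rho_{\beta}}$ from two explicit Gaussian integrals on phase space.

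Since $Q$ is a positive definite real symmetric $2\times 2$ matrix, Williamson's theorem (which in the planar case reduces to diagonalising $\Omega Q$, whose eigenvalues are $\pm\ui\omega$ with $\omega=\sqrt{\det Q}$) supplies a linear symplectic $S$ satisfying $S^{T}QS=\omega I_{2}$. Setting $\varphi(x)=Sx$ one has $H_{\varphi}(x)=\frac{\omega}{2}(p^{2}+q^{2})=:H_{\mathrm{osc}}(x)$. Because $\hat A(\beta)=\ue^{-\beta\hat H}$ satisfies the Bloch equation $\partial_{\beta}A=-H\# A$ with $A(0,\cdot)=1$, and $\#$ is symplectically invariant, $A_{\varphi}(\beta,\cdot)$ solves the same equation with $H$ replaced by $H_{\mathrm{osc}}$. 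Thus it suffices to compute the Weyl symbol for the oscillator and then pull back along $\varphi^{-1}$, using the identity $(S^{-1})^{T}S^{-1}=Q/\omega$ that follows directly from $S^{T}QS=\omega I_{2}$.

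For $H_{\mathrm{osc}}$ I would plug the Gaussian ansatz $A_{\mathrm{osc}}(\beta,x)=c(\beta)\exp\!\bigl(-\tfrac{\mu(\beta)}{2}(p^{2}+q^{2})\bigr)$ into the Bloch equation. Since $H_{\mathrm{osc}}$ is quadratic, all derivatives of $H_{\mathrm{osc}}$ of order $\geq 3$ vanish and the Moyal series for $H_{\mathrm{osc}}\# A_{\mathrm{osc}}$ terminates at second order; moreover the $n=1$ (Poisson bracket) term vanishes because $A_{\mathrm{osc}}$ is a function of $H_{\mathrm{osc}}$. Matching the coefficients of $1$ and of $(p^{2}+q^{2})$ on both sides of $\partial_{\beta}A_{\mathrm{osc}}=-H_{\mathrm{osc}}\# A_{\mathrm{osc}}$ yields a decoupled Riccati equation for $\mu$ with $\mu(0)=0$ and a linear equation for $c$ driven by $\mu$ with $c(0)=1$; explicit integration gives $\mu(\beta)=(2/\hbar)\tanh(\omega\hbar\beta/2)$ and $c(\beta)=1/\cosh(\omega\hbar\beta/2)$, and transporting back through $\varphi^{-1}$ delivers the stated form of $A(\beta,x)$.

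The remaining two statements are short Gaussian integrals. For $Z(\beta)$, apply \eqref{eq:trace-symbol} to $A(\beta,\cdot)$: the integral over $\R^{2}$ contributes a factor proportional to $\omega\hbar/\tanh(\omega\hbar\beta/2)$ (using $\det Q=\omega^{2}$) and the identity $\cosh u\tanh u=\sinh u$ collapses the prefactor to $(2\sinh(\omega\hbar\beta/2))^{-1}$. For the characteristic function, the Weyl symbol of $\hat\rho_{\beta}$ is a centred Gaussian in $x$, hence so is the Wigner function, and its phase-space Fourier transform is again Gaussian with inverse covariance, producing the quoted exponent proportional to $\xi\cdot Q^{-1}\xi$. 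The only subtle step is the middle one, where some care is needed to expand $H_{\mathrm{osc}}\# A_{\mathrm{osc}}$ correctly to read off a consistent pair of ODEs for $(c,\mu)$; everything else is Gaussian bookkeeping and the Williamson normalisation of $Q$.
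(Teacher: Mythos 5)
Your proposal follows essentially the same route as the paper's proof: symplectic (Williamson) reduction of $H$ to the harmonic oscillator $\frac{\omega}{2}(p^2+q^2)$, a Gaussian ansatz in the Bloch equation $\partial_\beta A=-H\# A$ whose Moyal expansion terminates at second order, explicit ODEs for the prefactor and the exponent (your Riccati equation for $\mu$ is the correct form; the paper's displayed equation for $f$ has a typo, $f$ in place of $f^2$, but its stated solution is the Riccati solution), and then the trace formula and a Gaussian Fourier transform for $Z(\beta)$ and $\chi_{\rho_\beta}$. The argument is correct and matches the paper's.
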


\begin{proof}
There exists a symplectic matrix $S$ such that $H(S(x))=\frac{\omega}{2}x^2$, where $\omega=\sqrt{\det Q}$. 
We will use that $A(\beta)$ satisfies the equation
\begin{equation}\label{eq:A-beta}
\pa_{\beta}A(\beta)=-H\# A(\beta)\,\, \quad A(0)=1\,\, , 
\end{equation}
which for $A_{\varphi}(\beta)$ with $\varphi(x)=S(x)$ becomes 
\begin{equation}\label{eq:A-beta-phi}
\pa_{\beta}A_{\varphi}(\beta)=-H_{\varphi}\# A_{\varphi}(\beta)\,\, \quad A_{\varphi}(0)=1\,\, , \quad H_{\varphi}(x)=\frac{\omega}{2}x^2
\end{equation}
We make an Ansatz as $A_{\varphi}(\beta)=c(\beta)\ue^{-f(\beta)H_{\varphi}(x)}$ and inserting this into \eqref{eq:A-beta-phi} and using \eqref{eq:product-AB} gives 
\begin{equation}
\frac{c'}{c}A_{\varphi}(\beta,x)-f'H_{\varphi}(x)A_{\varphi}(\beta,x)=-H_{\varphi}(x)A_{\varphi}(\beta,x)+\frac{\hbar^2}{8}H_{\varphi}(x)(\overleftarrow{\nabla} \cdot\Omega \overrightarrow{\nabla})^2A_{\varphi}(\beta,x)
\end{equation}
and with 
\begin{equation}
H_{\varphi}(x)(\overleftarrow{\nabla} \cdot\Omega \overrightarrow{\nabla})^2A_{\varphi}(\beta,x)=\omega(\pa_p^2+\pa_q^2)A_{\varphi}(\beta ,x)=(-2\omega^2 f+2\omega^2 f H_{\varphi}(x))A_{\varphi}(\beta,x)
\end{equation}
we obtain the two equations
\begin{equation}
f'=1-\frac{\hbar^2\omega^2}{4}f\,\, , \quad f(0)=0
\end{equation}
and
\begin{equation}
 \frac{c'}{c}=-\frac{\hbar^2\omega^2}{4} f\,\, \quad c(0)=1
 \end{equation}
 The solutions of these equations are 
 \begin{equation}
 f(\beta)=\frac{2\tanh(\hbar\omega\beta/2)}{\hbar \omega}\,\, ,\quad c(\beta)=\frac{1}{\cosh(\hbar\omega\beta/2)}\,\, 
 \end{equation}
and if we compose $A_{\varphi}(\beta)$ with $\varphi^{-1}$ we get the formula for $A(\beta)$. The expression for the trace follows by using \eqref{eq:trace-symbol} and the formula for the characteristic function follows by computing the Fourier transform of $A(\beta,x)/\tr \hat A(\beta)$.  
\end{proof}

One of the applications of this lemma is that it allows us to compute the logarithm of a Gaussian state \eqref{eq:Gaussian-state-diag-G}. Let $\hat \rho$ be a Gaussian state with covariance matrix $G$, satisfying $G+\ui\Omega\geq 0$, then by comparing the characteristic functions (for $\beta=1$) we see that 
\begin{equation}
\hat \rho=\frac{1}{Z(1)}\ue^{- \hat H}
\end{equation}
if 
\begin{equation}\label{eq:G-Q-1}
G=\frac{\omega}{\tanh(\omega\hbar/2)}\, Q^{-1}\,\, ,
\end{equation} 
where $H=\frac{1}{2}x\cdot Qx$ and $\omega=\sqrt{\det Q}$. This gives us then the expression 
\begin{equation}
\ln\hat\rho=- \hat H-\ln Z(1)\,\, .
\end{equation}
Taking the determinant of \eqref{eq:G-Q-1} leads to 
\begin{equation}
z=\frac{1}{\tanh(\omega\beta \hbar/2)}\,\, ,\quad\text{where}\quad z=\sqrt{\det G}
\end{equation}
and then we find
\begin{equation}
Q=2\frac{z}{\hbar}\coth^{-1}(z)\, G^{-1}\,\, ,
\end{equation}
and 
\begin{equation}
Z(1)=\frac{1}{2}\sqrt{z^2-1}\,\, .
\end{equation}


\section{Wave Packet Propagation}\label{app:wpp}

In this appendix we give the details on some of the more technical calculations involving semiclassical analysis which are used in Section \ref{sec:Gaussian-prop}. 

In the first part we describe how to derive the the wave packet propagation for Equation \eqref{eq:w_Lindblad} for Gaussian wave packets. These results can be derived as a special case of the non-Hermitian propagation in \cite{GraeSchu11,GraeSchu12}, but for sake of clarity we include an explicit derivation here.

The equation \eqref{eq:w_Lindblad} reads 
\begin{equation}\label{eq:w_Lindblad_2}
\hbar \pa_tw=\frac{\hbar^2\sigma^2}{2} \pa_p^2 w-\frac{\gamma^2\eta^2}{2}\, p^2 w+\ui\eta p w \,\, .
\end{equation}
and we make an Ansatz 
\begin{equation}\label{eq:w_ansatz}
w_t(p)=c(t)\ue^{-\frac{\alpha(t)}{\hbar}}\ue^{-\frac{\ui}{\hbar} Q(t) p}\ue^{-\frac{1}{\hbar}\frac{1}{a(t)}(p-P(t))^2}
\end{equation}
where the time dependent parameters $c(t), a(t), Q(t),P(t)$ are real and $a(t)>0$, and $\alpha(t)=D(t)-\ui\phi(t)$ is complex with $\Re \alpha (t)=D(t)\geq 0$. 

We will insert the Ansatz \eqref{eq:w_ansatz} into \eqref{eq:w_Lindblad_2} and then ordering by powers of $(p-P)$ will give us a set of equations for the parameters. For the left hand side of \eqref{eq:w_Lindblad_2} we find
\begin{equation}
\hbar \pa_tw=\bigg[\hbar \frac{\dot c}{c}-\dot\alpha -\ui \dot Q (p-P)-\ui \dot Q P+\frac{\dot a}{a^2}(p-P)^2+\frac{2}{a} \dot P(p-P)\bigg] w
\end{equation}
and for the right hand side we first compute 
\begin{equation}
\frac{\hbar^2\sigma^2}{2} \pa_p^2 w=\bigg(\frac{4}{a^2}(p-P)^2+\ui\frac{4}{a}Q(p-P)-Q^2-\frac{2\hbar}{a}\bigg)w
\end{equation}
and with $p=p-P+P$ and $p^2=(p-P)^2+2P(p-P)+P^2$ we get for the right hand side of \eqref{eq:w_Lindblad_2}
\begin{equation}
\begin{split}\bigg[\frac{2\sigma^2}{a^2}(p-P)^2&+\ui\frac{2\sigma^2}{a} Q(p-P)-\frac{\sigma^2}{2}Q^2-\frac{\sigma^2\hbar}{a}-\frac{\gamma^2\eta^2}{2}(p-P)^2\\&-\gamma^2\eta^2 P(p-P)-\frac{\gamma^2\eta^2}{2}P^2+\ui \eta (p-P)+\ui \eta P\bigg]w\,\, .
\end{split}
\end{equation}

If we now compare different powers of $(p-P)$ we get from the term containing $(p-P)^2$ an equation for $a$ 
\begin{equation}
\dot a=2\sigma^2-\frac{\gamma^2\eta^2}{2} a^2\label{eq:a}\, \, . 
\end{equation}
The terms containing $(p-P)$ give us an equation for $P$ and $Q$, where separating the real and imaginary parts leads to
\begin{align}
\dot P&=-\frac{\gamma^2\eta^2}{2} a P \label{eq:P}\\
\dot Q&=-\frac{2\sigma^2}{a}Q-\eta\,\, .
\end{align}
Finally for $\alpha$ and $c$ we find
\begin{align}
\dot c&=-\frac{\sigma^2}{a} c\\
\dot \alpha&=\frac{\sigma^2}{2}Q^2+\frac{\gamma^2\eta^2}{2}P^2-\ui \dot Q P-\ui \eta P\,\, .
\end{align}
Splitting the equation for $\alpha$ into its real and imaginary parts gives the equations \eqref{eq:D} and \eqref{eq:phi} for $D$ and $\phi$. 

In order to find the solutions to these equations it is useful to introduce $u(t)=\frac{P_0}{P(t)}$ and by taking derivatives of $u$ and  using \eqref{eq:P} and \eqref{eq:a} we find that 
\begin{equation}\label{eq:f}
a=\frac{2}{\gamma^2\eta^2}\frac{\dot u}{u} \,\, , \quad \text{and}\quad \ddot{u}=\gamma^2\sigma^2\eta^2 u\,\, .
\end{equation}
The second equation immediately gives $u(t)=u_0\cosh(\omega t)+u_1\sinh(\omega t)$, where $\omega=\sigma \gamma\eta$, and then the first equation reads $a(t)=\frac{2\sigma}{\gamma\abs{\eta}}\frac{u_0\sinh(\omega t)+u_1\cosh(\omega t)}{u_0\cosh(\omega t)+u_1\sinh(\omega t)}$. From the initial conditions $u(0)=u_0=1$ and $a(0)=a_0=g$ we get the values of $u_0$ and $u_1$ and find
\begin{equation}\label{eq:a-P}
a(t)=g\frac{v(t)}{u(t)}\,\, ,\quad 
P(t)=\frac{P_0}{u(t)}\,\, .
\end{equation}
where we introduced the auxiliary functions 
\begin{equation}\label{eq:def-u-v}
u(t)=\cosh(\omega t)+\frac{\omega}{\beta}\sinh(\omega t)\, \qquad 
v(t)=\cosh(\omega t)+\beta\frac{\sinh(\omega t)}{\omega}
\end{equation}
with 
\begin{equation}
\omega =\sigma \gamma \eta\,\, \quad \beta =\frac{2\sigma^2}{g}\,\, .
\end{equation}
The two functions are related by 
\begin{equation}\label{eq:u-v-derivatives}
\dot u=\frac{\omega^2}{\beta} v\,\, \quad \text{and}\quad \dot v=\beta u\,\, .
\end{equation}

Notice that the equation for $c$ looks very similar to the one for $P$ and with \eqref{eq:f} we find
\begin{equation}
\frac{\ud}{\ud t} \ln c=\frac{\dot c}{c}=-\frac{\sigma^2}{a}=-\frac{\sigma^2\gamma^2\eta^2}{2}\frac{u}{\dot u}=-\frac{1}{2}\frac{\ddot{u}}{\dot u}=\frac{\ud}{\ud t} \ln {\dot u}^{-1/2}\,\, ,
\end{equation}
and so 
\begin{equation}
c(t)=\frac{c_0}{\sqrt{v(t)}}\,\, .
\end{equation}

To solve the equation for $Q$ we first consider the homogeneous part of the equation
\begin{equation}
\dot{\tilde Q}=-\frac{2\sigma^2}{a} \tilde Q
\end{equation}
and using the same strategy as for $P$ we find that the solution with $\tilde Q(t=0)=1$ is given by $\tilde Q(t)=1/v(t)$. Then the variation of constants formula gives for $Q$ the solution 
\begin{equation}
Q(t)=\frac{1}{v(t)}\bigg[Q_0-\eta \int_0^t v(s)\,\, \ud s\bigg]\,\, ,
\end{equation}
and the integral can be easily computed using \eqref{eq:u-v-derivatives} to give
\begin{equation}\label{eq:Q(s)}
Q(t)=\frac{1}{v(t)}\bigg[ Q_0-\eta \beta\frac{u(t)-1}{\omega^2}\bigg]\,\, .
\end{equation}
Notice that 
\begin{equation}
\frac{u(t)-1}{\omega^2}=\frac{\cosh(\omega t)-1}{\omega^2}+ \frac{\sinh(\omega t)}{ \beta \omega}
\end{equation}
and therefore $Q(t)$ is smooth at $\omega=0$. 

Finally we can compute $D(t)$ and and $\phi(t)$ which are given by 
\begin{equation}
D(t)=\frac{1}{2}\int_0^t \sigma^2 Q(s)^2+\gamma^2\eta^2 P(s)^2\,\, \ud s
\end{equation}
and 
\begin{equation}
\phi(t)=\int_0^t \dot Q(s)P(s)+\eta P(s)\,\, \ud s=-2\sigma^2\int_0^t \frac{Q(s)P(s)}{a(s)}\, \ud s\,\, .
\end{equation}
We collect the more technical integrals we need in the following lemma. 

\begin{lem}\label{lem:integrals} Let $u(t)$ and $v(t)$ be given by \eqref{eq:def-u-v}, then we have
\begin{align}
\int_0^t \frac{u(s)}{v(s)^2}\, \,\ud s&=\frac{1}{\beta}\frac{v(t)-1}{v(t)}\\
\int_0^t\frac{1}{v(s)^2}\,\, \ud s &=\frac{\sinh(\omega t)}{\omega}\frac{1}{v(t)}\\
\int_0^t\frac{1}{u(s)^2}\,\, \ud s &=\frac{\sinh(\omega t)}{\omega}\frac{1}{u(t)}\\
\int_0^t\frac{u(s)^2}{v(s)^2}\,\, \ud s &=\frac{\omega^2}{\beta^2} \, t+\bigg(1-\frac{\omega^2}{\beta^2}\bigg)\frac{\sinh(\omega t)}{\omega}\frac{1}{v(t)}\,\, .
\end{align}

\end{lem}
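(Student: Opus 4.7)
The plan is to treat each of the four integrals by finding a closed-form antiderivative and verifying it with the relations \eqref{eq:u-v-derivatives}, namely $\dot u=(\omega^2/\beta)v$ and $\dot v=\beta u$. Throughout, the key hyperbolic identity $\cosh^2(\omega s)-\sinh^2(\omega s)=1$ (applied to $u,v$ as defined in \eqref{eq:def-u-v}) will do most of the work.

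For the first integral, I observe that $\dot v=\beta u$ immediately gives
\begin{equation*}
\frac{\ud}{\ud s}\Big[\frac{1}{v(s)}\Big]=-\frac{\beta u(s)}{v(s)^2},
\end{equation*}
so $u/v^2=-\frac{1}{\beta}(1/v)'$, and the fundamental theorem of calculus together with $v(0)=1$ yields the stated answer. For the second integral, I propose the antiderivative $\sinh(\omega s)/(\omega v(s))$; differentiating and using $\dot v=\beta u$ gives
\begin{equation*}
\frac{\ud}{\ud s}\Big[\frac{\sinh(\omega s)}{\omega v(s)}\Big]=\frac{\cosh(\omega s)v(s)-(\beta/\omega)u(s)\sinh(\omega s)}{v(s)^2},
\end{equation*}
and expanding $u$ and $v$ from \eqref{eq:def-u-v} the cross terms cancel and the numerator collapses to $\cosh^2(\omega s)-\sinh^2(\omega s)=1$. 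The third integral is handled by the completely symmetric calculation with $u\leftrightarrow v$ and $\dot u=(\omega^2/\beta)v$; the same cancellation of cross terms occurs.

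The fourth integral is the one requiring a small algebraic preliminary, and it will be the main (mild) obstacle. Here I propose to establish the pointwise identity
\begin{equation*}
u(s)^2=\frac{\omega^2}{\beta^2}\,v(s)^2+\Big(1-\frac{\omega^2}{\beta^2}\Big),
\end{equation*}
by expanding $u^2-(\omega/\beta)^2 v^2$ directly from \eqref{eq:def-u-v}: the $2(\omega/\beta)\cosh(\omega s)\sinh(\omega s)$ terms cancel, leaving $(1-\omega^2/\beta^2)(\cosh^2-\sinh^2)=1-\omega^2/\beta^2$. Dividing by $v(s)^2$ yields $u^2/v^2=\omega^2/\beta^2+(1-\omega^2/\beta^2)/v^2$, and integrating term by term using the second integral produces the fourth identity.

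In short, the two nontrivial ideas are the recognition that $\sinh(\omega s)/(\omega v(s))$ is an antiderivative of $1/v(s)^2$ (which follows from the $\dot v=\beta u$ relation combined with the hyperbolic Pythagorean identity), and the clean quadratic identity relating $u^2$ and $v^2$. Everything else is a routine application of the fundamental theorem of calculus and the initial values $u(0)=v(0)=1$.
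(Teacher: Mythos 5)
Your proposal is correct; every antiderivative you propose checks out against \eqref{eq:def-u-v} and \eqref{eq:u-v-derivatives}, and the first and fourth integrals are handled exactly as in the paper (the quadratic identity $u^2-\tfrac{\omega^2}{\beta^2}v^2=1-\tfrac{\omega^2}{\beta^2}$ that you obtain by direct expansion is precisely the paper's Wronskian identity $\dot v u-v\dot u=\beta u^2-\tfrac{\omega^2}{\beta}v^2=\beta-\tfrac{\omega^2}{\beta}$ divided by $\beta$). Where you diverge is in the second and third integrals: the paper integrates the relations $\tfrac{\ud}{\ud s}\tfrac{u}{v}=-\tfrac{\beta^2-\omega^2}{\beta}\tfrac{1}{v^2}$ and $\tfrac{\ud}{\ud s}\tfrac{v}{u}=\tfrac{\beta^2-\omega^2}{\beta}\tfrac{1}{u^2}$, which follow from the constancy of the Wronskian, and then uses $v(t)-u(t)=\tfrac{\beta^2-\omega^2}{\beta\omega}\sinh(\omega t)$ to recast the answer; you instead guess the antiderivative $\sinh(\omega s)/(\omega v(s))$ outright and verify it by differentiation and the hyperbolic Pythagorean identity. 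Both are one-line verifications, but your route has the small advantage of never dividing by $\beta^2-\omega^2$, so the degenerate case $\beta=\pm\omega$ needs no separate continuity argument, whereas the paper's route makes the structural reason for the identities (the Wronskian of the two solutions of $\ddot u=\omega^2 u$) more visible.
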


\begin{proof}
We have by \eqref{eq:u-v-derivatives} that $-\beta u/v^2=\frac{\ud}{\ud s}\frac{1}{v}$ and hence 
\begin{equation}
\int_0^t \frac{u}{v^2}\, \ud s=-\frac{1}{\beta}\bigg[\frac{1}{v(t)}-\frac{1}{v(0)}\bigg]=\frac{1}{\beta}\frac{v(t)-1}{v(t)}\,\, .
\end{equation}
In preparation for the remaining integrals we notice that by \eqref{eq:u-v-derivatives}
\begin{equation}\label{eq:Wronskian}
\dot v u-v\dot u=\beta u^2-\frac{\omega^2}{\beta}v^2=\beta -\frac{\omega^2}{\beta}
\end{equation}
where the last identity follows by noticing the expression is the Wronskian of $v$ and $u$, and hence constant, and therefore equal to its value  at $t=0$.
Returning to the integrals we find using \eqref{eq:Wronskian}
\begin{align}
\frac{\ud}{\ud s}\frac{v}{u}&=\frac{\dot v u-v\dot u}{u^2}=\frac{\beta^2-\omega^2}{\beta}\frac{1}{u^2}\\
\frac{\ud}{\ud s}\frac{u}{v}&=\frac{\dot u v-u\dot v}{v^2}=-\frac{\beta^2-\omega^2}{\beta}\frac{1}{v^2}
\end{align}
and integrating the second relation and using $v(t)-u(t)=\frac{\beta^2-\omega^2}{\beta\omega}\sinh(\omega t)$ gives 
\begin{equation}
\int_0^t\frac{1}{v^2}\, \ud s=-\frac{\beta}{\beta^2-\omega^2}\bigg(\frac{u(t)}{v(t)}-1\bigg)=\frac{\sinh(\omega t)}{\omega}\frac{1}{v(t)}\,\, ,
\end{equation}
and similarly we obtain
\begin{equation}
\int_0^t\frac{1}{u^2}\, \ud s=\frac{\sinh(\omega t)}{\omega}\frac{1}{u(t)}\,\, .
\end{equation}
Finally, from \eqref{eq:Wronskian} we get 
\begin{equation}
\frac{u^2}{v^2}=\frac{\omega^2}{\beta^2}+\bigg(1-\frac{\omega^2}{\beta^2}\bigg)\frac{1}{v^2}
\end{equation}
and integrating this relation gives 
\begin{equation}
\int_0^t\frac{u^2}{v^2}\,\, \ud s=\frac{\omega^2}{\beta^2} \, t+\bigg(1-\frac{\omega^2}{\beta^2}\bigg)\frac{\sinh(\omega t)}{\omega}\frac{1}{v(t)}\,\, .
\end{equation}
\end{proof}

\begin{lem} We have
\begin{equation}\label{eq:int-Q}
\begin{split}
\int_0^tQ(s)^2\, \ud s=&\frac{\eta^2}{\omega^2}\bigg(t v(t,\omega)-\frac{\sinh(\omega t)}{\omega}-2\beta \frac{\cosh(\omega t)-1}{\omega^2}\bigg)\frac{1}{v(t,\omega)}\\
&-2\eta Q_0\frac{\cosh(\omega t)-1}{\omega^2}\frac{1}{v(t,\omega)}+Q_0^2\frac{\sinh(\omega t)}{\omega}\frac{1}{v(t,\omega)}\,\, , 
\end{split}
\end{equation}
\begin{equation}\label{P_sqr_integral}
\int_0^tP(s)^2\, \ud s=P_0^2\frac{\sinh(\omega t)}{\omega}\frac{1}{u(t,\omega)}
\end{equation}
and 
\begin{equation}
\int_0^t \frac{Q(s)P(s)}{a(s)}\, \ud s=\frac{P_0Q_0}{g}\frac{\sinh(\omega t)}{\omega}\frac{1}{v(t)}-\frac{P_0\eta }{g}\frac{\cosh(\omega t)-1}{\omega^2}\frac{1}{v(t)}\,\, .
\end{equation}
\end{lem}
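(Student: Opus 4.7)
The approach is to substitute the explicit formulas \eqref{eq:a-P} for $a(s)$ and $P(s)$, and \eqref{eq:Q(s)} for $Q(s)$, into each integrand and reduce the result to the elementary integrals tabulated in Lemma~\ref{lem:integrals}; the proof is then essentially book-keeping.

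The $P$-integral is immediate: $P(s)^2 = P_0^2/u(s)^2$, so \eqref{P_sqr_integral} is $P_0^2$ times the third entry of Lemma~\ref{lem:integrals}. The mixed integral simplifies because $P(s)/a(s) = P_0/(g v(s))$, after which $Q(s)P(s)/a(s)$ becomes a linear combination of $1/v(s)^2$ and $u(s)/v(s)^2$. The first two entries of Lemma~\ref{lem:integrals} then produce terms of the form $\sinh(\omega t)/(\omega v(t))$ and $(v(t)-1)/(\beta v(t))$; using $v(t) - 1 = \omega^2 \ch(t,\omega) + \beta \sh(t,\omega)$ to split the latter regroups the pieces into the claimed combination of a $\sh/v$ and a $\ch/v$ term.

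The main work lies in $\int_0^t Q(s)^2 \, \ud s$. I would rewrite $Q(s) = (A - C u(s))/v(s)$ with $A := Q_0 + \eta\beta/\omega^2$ and $C := \eta\beta/\omega^2$, so that $Q(s)^2$ expands into three monomials $A^2/v^2$, $-2AC\,u/v^2$ and $C^2 u^2/v^2$. Applying Lemma~\ref{lem:integrals} term by term gives
\[
\big[(A-C)^2 - C^2 \omega^2/\beta^2\big]\, \sh(t,\omega)/v(t) \,-\, 2AC\,\omega^2 \ch(t,\omega)/(\beta v(t)) \,+\, C^2 \omega^2 t/\beta^2.
\]
The most delicate part of the proof is matching this expression against \eqref{eq:int-Q}. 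The key identities are $A - C = Q_0$, $C^2 \omega^2/\beta^2 = \eta^2/\omega^2$ and $2AC\omega^2/\beta = 2\eta Q_0 + 2\eta^2\beta/\omega^2$; they split the display above cleanly into the $Q_0^2$, $\eta Q_0$ and $\eta^2$ contributions in \eqref{eq:int-Q}, with the three $\eta^2$ pieces $\eta^2 t/\omega^2$, $-\eta^2 \sh/(\omega^2 v(t))$ and $-2\eta^2\beta\ch/(\omega^2 v(t))$ recombining exactly into the first bracket $\eta^2[t v(t) - \sh(t,\omega) - 2\beta \ch(t,\omega)]/(\omega^2 v(t))$ of \eqref{eq:int-Q}. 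The obstacle is purely notational: the variables $A$ and $C$ must be chosen with care so that the coefficients produced by Lemma~\ref{lem:integrals} align with the target form, after which the verification reduces to grouping by powers of $Q_0$ and $\eta$.
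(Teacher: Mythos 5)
Your proposal is correct and follows essentially the same route as the paper: substitute the explicit formulas for $P$, $Q$ and $a$, reduce each integrand to the elementary integrals of Lemma~\ref{lem:integrals}, and regroup using $v(t)-1=\omega^2\ch(t,\omega)+\beta\sh(t,\omega)$. The only difference is cosmetic — you expand $Q=(A-Cu)/v$ in powers of $u$ where the paper expands in powers of $(u-1)$ — and your coefficient identities $A-C=Q_0$, $C^2\omega^2/\beta^2=\eta^2/\omega^2$ and $2AC\omega^2/\beta=2\eta Q_0+2\eta^2\beta/\omega^2$ all check out.
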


\begin{proof}
Using \eqref{eq:Q(s)} we have 
\begin{equation}
\int_0^t Q^2\, \ud s =Q_0^2\int_0^t \frac{1}{v^2}\, \ud s-2Q_0\frac{\eta \beta}{\omega^2}\int_0^t\frac{u-1}{v^2}\, \ud s+\frac{\eta^2\beta^2}{\omega^4}\int_0^t \frac{(u-1)^2}{v^2}\, \ud s
\end{equation}
and the individual integrals give 
\begin{align}
\int_0^t\frac{1}{v^2}\, \ud s &=\frac{\sinh(\omega t)}{\omega}\frac{1}{v(t)}\,\, ,\\
\int_0^t\frac{u-1}{v^2}\,\, \ud s &=\frac{1}{\beta}\frac{v-1}{v}-\frac{\sinh(\omega t)}{\omega}\frac{1}{v}=\frac{1}{\beta}\frac{\cosh(\omega t)-1}{v(t)}
\end{align}
and
\begin{equation}
\begin{split}
\int_0^t \frac{(u-1)^2}{v^2}\, \ud s&=\int_0^s \frac{u^2}{v^2}-2\frac{u}{v^2}+\frac{1}{v^2}\, \ud s\\
&=\frac{\omega^2}{\beta^2} \, t +\big(1-\frac{\omega^2}{\beta^2}\bigg)\frac{\sinh(\omega t)}{\omega}\frac{1}{v}-2\frac{1}{\beta}\frac{v-1}{v}+\frac{\sinh(\omega t)}{\omega}\frac{1}{v}\\
&=\frac{\omega^2}{\beta^2}\bigg(tv(t)-\frac{\sinh(\omega t)}{\omega}-2\beta \frac{\cosh(\omega t)-1}{\omega^2}\bigg)\frac{1}{v}\,\, .
\end{split}
\end{equation}
and combining these gives \eqref{eq:int-Q}.

The integral of $P(s)^2=P_0^2/u(s)^2$ given by equation \eqref{P_sqr_integral} follows directly from Lemma \ref{lem:integrals}. 

For the final integral we use that by \eqref{eq:a-P} we have $P(s)/a(s)=P_0/(gv(s))$ and hence 
\begin{equation}
\begin{split}
\int_0^t \frac{Q(s)P(s)}{a(s)}\, \ud s=&\frac{P_0}{g}\int_0^t \frac{Q(s)}{v(s)}\, \ud s\\
=&\frac{P_0}{g}\bigg(Q_0+\frac{\eta\beta}{\omega^2}\bigg) \int_0^t \frac{1}{v^2}\, \ud s-\frac{P_0\eta \beta}{g \omega^2}\int_0^t \frac{u}{v^2}\, \ud s\\
=&\frac{P_0}{g}\bigg(Q_0+\frac{\eta\beta}{\omega^2}\bigg) \frac{\sinh(\omega t)}{\omega}\frac{1}{v}-\frac{P_0\eta \beta}{g \omega^2}\frac{1}{\beta}\frac{v-1}{v}\\
=&\frac{P_0Q_0}{g}\frac{\sinh(\omega t)}{\omega}\frac{1}{v(t)}-\frac{P_0\eta }{g}\frac{\cosh(\omega t)-1}{\omega^2}\frac{1}{v(t)}\,\, .
\end{split}
\end{equation}

\end{proof}

Using these results we find that 
\begin{equation}
\phi(t,\eta)=-\frac{2\sigma^2P_0}{g}\bigg[Q_0\frac{\sinh(\omega t)}{\omega}-\eta \frac{\cosh(\omega t)-1}{\omega^2}\bigg]\frac{1}{v(t,\omega)}
\end{equation}
and 
\begin{equation}
\begin{split}
D(t,\eta)=&\frac{1}{2\gamma^2}\bigg(t v(t,\omega)-\frac{\sinh(\omega t)}{\omega}-2\beta \frac{\cosh(\omega t)-1}{\omega^2}\bigg)\frac{1}{v(t,\omega)}\\
&-\sigma^2\eta Q_0\frac{\cosh(\omega t)-1}{\omega^2}\frac{1}{v(t,\omega)}+\frac{\sigma^2Q_0^2}{2}\frac{\sinh(\omega t)}{\omega}\frac{1}{v(t,\omega)}+\frac{\gamma^2\eta^2P_0^2}{2}\frac{\sinh(\omega t)}{\omega}\frac{1}{u(t,\omega)}
\end{split}
\end{equation}
where $\omega=\sigma\gamma\eta$. We have arranged the terms so that the limits $\omega\to 0$ and $\gamma\to 0$ do not cause any artificial singularities, in particular we have 
\begin{equation}
t v(t,\omega)-\frac{\sinh(\omega t)}{\omega}-2\beta \frac{\cosh(\omega t)-1}{\omega^2}=\omega^2\bigg(\frac{1}{3}t^3+\frac{\beta}{12}t^4\bigg)+O(\omega^3)
\end{equation}
which implies that the limit $\gamma\to 0$ of $D(t, \eta)$ is well defined and gives 
\begin{equation}
\lim_{\gamma\to 0}D(t, \eta)=\frac{\sigma^2}{1+\beta t}\bigg(\frac{1}{6}t^3+\frac{\sigma^2}{12 g}t^4\bigg)\eta^2-\frac{Q_0\sigma^2}{2}\frac{t^2}{1+\beta t}\, \eta+\frac{Q_0^2\sigma^2}{2}\frac{t}{1+\beta t}\,\, .
\end{equation}


\section*{References}
\bibliographystyle{amsalpha}
\bibliography{BGCbibliography}

\end{document}